\documentclass[letterpaper,10pt,conference]{ieeeconf}

\IEEEoverridecommandlockouts 
\overrideIEEEmargins         

\usepackage{amsmath}
\usepackage{amssymb}
\usepackage{bm}

\usepackage[english]{babel}

\usepackage{graphicx}

\usepackage{tikz}
\usetikzlibrary[positioning,arrows,backgrounds,calc]
\usepackage{pgfplots}
\pgfplotsset{compat=newest}

%
\newtheorem{theorem}{Theorem}
\newtheorem{lemma}[theorem]{Lemma}

\newtheorem{definition}{Definition}

\newtheorem{remarkenv}{Remark}
\newenvironment{remark}{\begin{remarkenv}}{\hfill\raisebox{0.5mm}[0cm][0cm]{$\lhd$}\end{remarkenv}}

\newcommand{\sortbib}[1]{}

\setlength\arraycolsep{2pt}

\usepackage{dsfont}
\newcommand{\R}{\ensuremath{\mathds{R}}} 


\newcommand{\T}{\ensuremath{\mathrm{T}}}                          


\usepackage{xspace}


\newcommand{\Acal}{\ensuremath{\mathcal{A}}}

\newcommand{\Ccal}{\ensuremath{\mathcal{C}}}
\newcommand{\Dcal}{\ensuremath{\mathcal{D}}}
\newcommand{\Ecal}{\ensuremath{\mathcal{E}}}

\newcommand{\Gcal}{\ensuremath{\mathcal{G}}}

\newcommand{\Scal}{\ensuremath{\mathcal{S}}}

\newcommand{\Vcal}{\ensuremath{\mathcal{V}}}
\newcommand{\Wcal}{\ensuremath{\mathcal{W}}}
\newcommand{\Xcal}{\ensuremath{\mathcal{X}}}
\newcommand{\Ycal}{\ensuremath{\mathcal{Y}}}


\newcommand{\sys}{\ensuremath{\bm{\Sigma}}}

\newcommand{\con}{\ensuremath{\Ccal}}
\newcommand{\ass}{\ensuremath{\Acal}}
\newcommand{\gar}{\ensuremath{\Gcal}}
\newcommand{\env}{\ensuremath{\Ecal}}


\DeclareMathOperator{\comp}{\otimes}

\DeclareMathOperator{\image}{im}
\DeclareMathOperator{\kernel}{ker}
\renewcommand{\star}{*}

\begin{document}

\title{Contracts as specifications for dynamical systems in driving variable form}

\author{Bart Besselink,
        Karl H.\ Johansson,
        Arjan van der Schaft%
\thanks{Bart Besselink and Arjan van der Schaft are with the Jan C.\ Willems Centre for Systems and Control and the Bernoulli Institute for Mathematics, Computer Science, and Artificial Intelligence, University of Groningen, The Netherlands; Email: b.besselink@rug.nl; a.j.van.der.schaft@rug.nl}%
\thanks{Karl H.\ Johansson is with the School of Electrical Engineering and Computer Science, KTH Royal Institute of Technology, Sweden; Email: kallej@kth.se}}

\maketitle
\thispagestyle{empty}
\pagestyle{empty}

\begin{abstract}
This paper introduces assume/guarantee contracts on continuous-time control systems, hereby extending contract theories for discrete systems to certain new model classes and specifications. Contracts are regarded as formal characterizations of control specifications, providing an alternative to specifications in terms of dissipativity properties or set-invariance. The framework has the potential to capture a richer class of specifications more suitable for complex engineering systems. The proposed contracts are supported by results that enable the verification of contract implementation and the comparison of contracts. These results are illustrated by an example of a vehicle following system.
\end{abstract}

\section{Introduction}\label{sec_introduction}
Specifications on dynamical (control) systems typically come in the form of requirements on stability, performance (generally expressed as a bounded gain for a suitably chosen input-output pair), or passivity. Such specifications have in common that they can be captured in the elegant framework of dissipativity as introduced in \cite{willems_1972}, see also \cite{book_vanderschaft_2017,book_arcak_2016} and~\cite{book_sepulchre_1997} for related control approaches.
An alternative class of specifications can be characterized through set-invariance techniques, capturing properties such as safety, e.g., \cite{blanchini_1999}.

However, stricter performance requirements and increasing complexity of modern engineering systems such as intelligent transportation systems or smart manufacturing systems require the expression of control specifications that go beyond dissipativity or invariance. This observation motivates the work on formal methods in control, e.g., \cite{book_tabuada_2009,book_belta_2017}, which generally requires the abstraction of continuous dynamical systems to discrete transition systems because of the need to express logic specifications such as LTL.

A different approach is taken in this paper. Namely, we present an approach for expressing rich specifications on dynamical systems directly in the continuous domain by introducing \emph{contracts} for linear time-invariant dynamical systems, leading to the following contributions.

First, inspired by contracts for discrete systems in computer science developed in~\cite{benveniste_2008} and~\cite{benveniste_2018}, we define \emph{assume/guarantee contracts} for a class of continuous-time linear dynamical systems. A contract is a pair of dynamical systems known as the assumptions and guarantees and can be regarded as a specification on the dynamical system. Namely, a system implements a contract (i.e., satisfies the specification) when it satisfies the guarantees whenever it is interconnected with an environment that satisfies the assumptions; a definition that will be made precise by using the notion of \emph{simulation} (see, e.g., \cite{pappas_2003,vanderschaft_2004}) as a means for comparing system behavior.

Second, we present a result that allows for efficiently verifying whether a dynamical system implements a given contract. In particular, geometric conditions are given for contract implementation enabling the use of tools from geometric control theory, e.g., \cite{book_wonham_1979,book_basile_1992,book_trentelman_2001}.

Third, a notion of contract \emph{refinement} is developed as a means for comparing contracts, i.e., providing a way to formalize whether a given contract provides tighter or relaxed requirements with respect to a second contract. The definition is again inspired by results in computer science, see \cite{bauer_2012,benveniste_2018}.

Fourth, the contract approach is illustrated by application to a vehicle following system as used for, e.g., vehicle platooning \cite{alam_2015b}. Here, the objective is to \emph{guarantee} a desired inter-vehicle distance under the \emph{assumption} that the lead vehicle satisfies the kinematic relation. Crucially, knowledge of the exact dynamics of the lead vehicle is not assumed.

Through the above contributions, it is argued that assume/guarantee contracts provide various distinguishing and useful features with respect to specifications expressed using dissipativity theory or set invariance. Namely, the explicit characterization (through the assumptions) of the set of environments in which the dynamical system is expected to operate potentially allows for relaxing requirements on the system (as the guarantees might be partially ensured by the assumptions). The characterization of these environments is particularly relevant in the analysis of interconnected systems; an important topic that will be explored in future work.

Moreover, by using simulation relations as a basis for comparing system behavior (recall that the assumptions and guarantees are themselves dynamical systems), a rich class of system behaviors can be characterized including dynamic behavior. We note that the static nature of the supply rates limits the ability of traditional dissipativity theory to capture dynamic behavior. Dynamic supply rates \cite{book_arcak_2016} and integral quadratic constraints \cite{megretski_1997} have been introduced to address this limitation, but these approaches do again not characterize the environment in which a system operates.

Related work on contracts for dynamical systems is presented in \cite{saoud_2018}, where contracts are used to capture set-invariance properties in input and output spaces. As such, the contracts in \cite{saoud_2018} do not allow for including dynamics in the specification. Richer contracts, called parametric assume/guarantee contracts, are defined in~\cite{kim_2017}, allowing for expressing input-output gain properties. However, only discrete-time systems are considered in this work.

The remainder of this paper is organized as follows. Section~\ref{sec_dvsystems} introduces the class of systems considered in this work and develops a notion of simulation for such systems. Corresponding compositional properties are given in Section~\ref{sec_composition} before developing contracts as system specifications in Section~\ref{sec_contracts}. Section~\ref{sec_example} presents an illustrative example and Section~\ref{sec_conclusions} concludes the paper.

\textit{Notation.} For a linear map $A:\Xcal\rightarrow\Ycal$ with $\Xcal$ and $\Ycal$ finite-dimensional vector spaces, $\image A$ and $\kernel A$ denote the image and kernel of $A$, respectively. Given a linear subspace $\Vcal\subset\Xcal\times\Ycal$, let $\pi_{\Xcal}(\Vcal) = \{ x \:|\: \exists y \text{ s.t.\ } (x,y)\in\Vcal \}$ be the projection of $\Vcal$ on $\Xcal$; $\pi_{\Ycal}(\Vcal)$ is defined similarly.

\section{Systems in driving variable form}\label{sec_dvsystems}
Consider the linear dynamical system
\begin{align}
\sys_i:\left\{\begin{array}{rl}
\dot{x}_i &= A_ix_i + G_id_i, \\ w_i &= C_ix_i, \\ 0 &= H_ix_i,
\end{array}\right.\label{eqn_sysi}
\end{align}
with state $x_i\in\Xcal_i$, external variable $w_i\in\Wcal$, and driving variable $d_i\in\Dcal_i$. Here, $\Xcal_i$, $\Wcal$, and $\Dcal_i$ are finite-dimensional vector spaces. The system (\ref{eqn_sysi}) is regarded as an \emph{open} system in which the external variable $w_i$ interacts with the environment, whereas the state $x_i$ is the internal variable. The driving variable $d_i$ plays the role of generator of trajectories.
\begin{remark}
The interpretation of the system (\ref{eqn_sysi}) in terms of external variables and (internal) state variables is similar to the perspective taken in the behavioral approach to system theory, e.g., \cite{willems_2007b}. In fact, the form (\ref{eqn_sysi}) (but without constraints $H_ix_i=0$) is given in \cite{willems_1983} as one representation of this perspective. We stress that no explicit distinction is made between inputs and outputs in (\ref{eqn_sysi}), even though the external variable $w_i$ could be partitioned as such (see again~\cite{willems_2007b}).
\end{remark}
\begin{remark}
The algebraic constraints in (\ref{eqn_sysi}) provide a flexible system description that will turn out to be useful in defining system composition (in Section~\ref{sec_composition}) as well as in formalizing complex specification, see the example in Section~\ref{sec_example}.
\end{remark}

Due to the algebraic constraints in (\ref{eqn_sysi}), not all initial conditions lead to feasible trajectories. This motivates the introduction of the \emph{consistent subspace} $\Vcal_i^{\star}$ as the set of initial conditions $x_i(0)$ for which there exists (for some $d_i(\cdot)$) a trajectory $x_i(\cdot)$ that satisfies the constraints, i.e., $H_ix_i(t) = 0$ for all $t\in\R_+$. The consistent subspace can be characterized as the largest (with respect to subspace inclusion) subspace $\Vcal_i\subset\Xcal_i$ such that
\begin{align}
A_i\Vcal_i \subset \Vcal_i + \image G_i, \qquad \Vcal_i \subset \kernel H_i,
\label{eqn_consistentsubspace}
\end{align}
see, e.g., \cite{vanderschaft_2004b,megawati_2016}.

Following results on unconstrained systems in \cite{pappas_2003,vanderschaft_2004} and constrained systems in \cite{vanderschaft_2004b}, the notion of \emph{simulation relation} is introduced as a means for comparing the behavior of systems $\sys_1$ and $\sys_2$. Such system comparison will turn out to be crucial for expressing rich system specifications.
\begin{definition}\label{def_simulationrel}
A linear subspace $\Scal\subset\Xcal_1\times\Xcal_2$ satisfying $\pi_{\Xcal_i}(\Scal) \subset \Vcal_i^{\star}$, $i\in\{1,2\}$, is a simulation relation of $\sys_1$ by $\sys_2$ if, for all $(x_1(0),x_2(0))\in\Scal$, the following hold:
\begin{enumerate}
	\item\label{def_simulationrel_item1} for each driving function $d_1(\cdot)$ such that the corresponding state trajectory $x_1(\cdot)$ with initial condition $x_1(0)$ satisfies $x_1(t)\in\Vcal_1^{\star}$ for all $t\in\R_+$, there exists a driving function $d_2(\cdot)$ such that the corresponding state trajectory $x_2(\cdot)$ with initial condition $x_2(0)$ satisfies
	\begin{align}
	\big(x_1(t),x_2(t)\big) \in \Scal \label{eqn_def_simulationrel_state}
	\end{align}
	for all $t\in\R_+$;
	\item\label{def_simulationrel_item2} the external variables are equal, i.e.,
	\begin{align}
	C_1x_1(0) = C_2x_2(0). \label{eqn_def_simulationrel_output}
	\end{align}
	\vskip1mm%
\end{enumerate}
\end{definition}

Whereas simulation relations are defined in terms of system trajectories, equivalent algebraic conditions can be obtained using standard arguments from geometric control theory, e.g., \cite{book_wonham_1979,book_basile_1992,book_trentelman_2001}. This is formalized next.
\begin{lemma}\label{lem_simulationrel}
A linear subspace $\Scal\subset\Xcal_1\times\Xcal_2$ satisfying $\pi_{\Xcal_i}(\Scal)\subset\Vcal_i^{\star}$, $i\in\{1,2\}$, is a simulation relation of $\sys_1$ by $\sys_2$ if and only if the following conditions hold for all $(x_1,x_2)\in\Scal$:
\begin{enumerate}
	\item\label{lem_simulationrel_item1} for all $d_1\in\Dcal_1$ such that $A_1x_1 + G_1d_1\in\Vcal_1^{\star}$, there exists $d_2\in\Dcal_2$ such that $A_2x_2 + G_2d_2\in\Vcal_2^{\star}$ and
	\begin{align}
	\big(A_1x_1 + G_1d_1,A_2x_2 + G_2d_2\big)\in\Scal;
	\label{eqn_lem_simulationrel_state}
	\end{align}
	\item\label{lem_simulationrel_item2} the external variables are equal, i.e.,
	\begin{align}
	C_1x_1 = C_2x_2. \label{eqn_lem_simulationrel_output}
	\end{align}
	\vskip1mm%
\end{enumerate}
\end{lemma}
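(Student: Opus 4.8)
The plan is to verify the two numbered conditions separately. Condition~\ref{lem_simulationrel_item2} is immediate: item~\ref{def_simulationrel_item2} of Definition~\ref{def_simulationrel} must hold for every admissible initial condition, and since every element of $\Scal$ qualifies as such an initial condition, the pointwise requirement $C_1x_1=C_2x_2$ on $\Scal$ is equivalent to (\ref{eqn_def_simulationrel_output}). The substance therefore lies in the equivalence between the trajectory condition (item~\ref{def_simulationrel_item1}) and the one-step condition~\ref{lem_simulationrel_item1}, which I would treat as an \emph{integration} direction (algebraic $\Rightarrow$ trajectory) and a \emph{differentiation} direction (trajectory $\Rightarrow$ algebraic).

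For the integration direction, assume conditions~\ref{lem_simulationrel_item1} and~\ref{lem_simulationrel_item2} and fix $(x_1(0),x_2(0))\in\Scal$ together with a driving function $d_1(\cdot)$ whose trajectory satisfies $x_1(t)\in\Vcal_1^{\star}$ for all $t$. Then $\dot{x}_1(t)=A_1x_1(t)+G_1d_1(t)$ is tangent to $\Vcal_1^{\star}$, hence lies in $\Vcal_1^{\star}$, so condition~\ref{lem_simulationrel_item1} applies at each $t$. Exploiting linearity, I would fix the selection of $d_2$ to be a single linear map of the data $(x_1(t),x_2(t),d_1(t))$, obtained from a linear right inverse on $\Scal$; this yields a time-varying linear (measurable) ODE for $x_2$ solvable from $x_2(0)$. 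Writing $(x_1(t),x_2(t))=(x_1(0),x_2(0))+\int_0^t(\dot{x}_1,\dot{x}_2)\di s$ with the integrand in $\Scal$ by construction, closedness of the subspace $\Scal$ gives $(x_1(t),x_2(t))\in\Scal$ for all $t$; moreover $x_2(t)\in\pi_{\Xcal_2}(\Scal)\subset\Vcal_2^{\star}$, so $x_2(\cdot)$ is consistent and item~\ref{def_simulationrel_item1} of Definition~\ref{def_simulationrel} holds.

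For the differentiation direction, assume item~\ref{def_simulationrel_item1} of Definition~\ref{def_simulationrel} and fix $(x_1,x_2)\in\Scal$ and $d_1\in\Dcal_1$ with $\delta_1\defl A_1x_1+G_1d_1\in\Vcal_1^{\star}$. The key step is to realize a \emph{smooth} trajectory in $\Vcal_1^{\star}$ with $x_1(0)=x_1$ and prescribed initial velocity $\dot{x}_1(0)=\delta_1$. Since (\ref{eqn_consistentsubspace}) states that $\Vcal_1^{\star}$ is controlled invariant, there is a friend $F_1:\Xcal_1\to\Dcal_1$ with $(A_1+G_1F_1)\Vcal_1^{\star}\subset\Vcal_1^{\star}$; taking $d_1(t)=F_1x_1(t)+(d_1-F_1x_1)\phi(t)$ for a smooth scalar $\phi$ with $\phi(0)=1$ gives $\dot{x}_1=(A_1+G_1F_1)x_1+G_1(d_1-F_1x_1)\phi(t)$, whose forcing term $G_1(d_1-F_1x_1)=\delta_1-(A_1+G_1F_1)x_1$ lies in $\Vcal_1^{\star}$, so the trajectory remains in $\Vcal_1^{\star}$ while $\dot{x}_1(0)=\delta_1$. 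Applying the definition yields $x_2(\cdot)$ with $(x_1(t),x_2(t))\in\Scal$ and $x_2(t)\in\Vcal_2^{\star}$, and differentiating this inclusion at $t=0$ should produce $d_2$ with $A_2x_2+G_2d_2\in\Vcal_2^{\star}$ and $(\delta_1,A_2x_2+G_2d_2)\in\Scal$, which is exactly condition~\ref{lem_simulationrel_item1}.

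I expect the main obstacle to be this last differentiation step: the driving function $d_2(\cdot)$ supplied by the definition need not be regular, so $\dot{x}_2(0)$ may fail to exist pointwise. The plan is to pass to the difference quotients $t^{-1}\big[(x_1(t),x_2(t))-(x_1,x_2)\big]\in\Scal$, whose first component converges to $\delta_1$ by smoothness of $x_1(\cdot)$, and to use closedness of $\Scal$, of $\Vcal_2^{\star}$, and of $\image G_2$ to identify an admissible limiting velocity of the form $A_2x_2+G_2d_2$. Controlling the second component, namely its boundedness along a sequence $t\to 0^{+}$, is where care is needed; this is most cleanly handled by restricting attention to driving functions for which the relevant one-sided velocity exists, as is standard in this geometric setting.
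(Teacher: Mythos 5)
The paper proves this lemma only by citation (to Proposition~3.1 of Megawati--van der Schaft and Proposition~2.9 of van der Schaft's 2004 paper), and your reconstruction follows exactly that standard geometric route: a linear selection of $d_2$ turning the pointwise condition into an invariance property of $\Scal$ for the ``integration'' direction, and smooth test trajectories plus difference quotients for the ``differentiation'' direction. Two remarks on the steps you leave open. First, in the integration direction, ``the integrand lies in $\Scal$ by construction'' is circular as stated: invoking the linear selection at time $t$ requires $(x_1(t),x_2(t))\in\Scal$, which is what is being proved. The standard repair is to note that the selected feedback makes the extended vector field map $\Scal$ into $\Scal$ along admissible $d_1$, so the induced dynamics on the quotient $(\Xcal_1\times\Xcal_2)/\Scal$ is linear and unforced, whence a trajectory starting at $0$ in the quotient stays at $0$. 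Second, the boundedness worry in the differentiation direction can be removed without restricting the class of driving functions: do not attempt to identify a limiting velocity of $x_2(\cdot)$ itself, but project the difference quotients $q(t)=t^{-1}\big((x_1(t),x_2(t))-(x_1,x_2)\big)\in\Scal$ modulo $\{0\}\times\image G_2$. Since $q_2(t)=A_2\,t^{-1}\!\int_0^t x_2(s)\di s+G_2\,t^{-1}\!\int_0^t d_2(s)\di s$ and $x_2(\cdot)$ is continuous at $0$, the projected quotients converge to the class of $(\delta_1,A_2x_2)$, which therefore lies in the (automatically closed) image of $\Scal$ in the quotient; unfolding this gives $(\delta_1,A_2x_2+G_2d_2)\in\Scal$ for some $d_2\in\Dcal_2$, and $A_2x_2+G_2d_2\in\pi_{\Xcal_2}(\Scal)\subset\Vcal_2^{\star}$ then comes for free. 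With these two repairs your argument is complete and agrees with the proofs the paper points to.
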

\begin{proof}
The proof follows that of \cite[Proposition~3.1]{megawati_2016}, see also \cite[Proposition~2.9]{vanderschaft_2004} for unconstrained systems.
\end{proof}

Now, \emph{simulation} can be defined directly using the notion of simulation relation in Definition~\ref{def_simulationrel}.
\begin{definition}\label{def_simulation}
A system $\sys_1$ is said to be simulated by $\sys_2$ (or, $\sys_2$ simulates $\sys_1$), denoted $\sys_1\preccurlyeq\sys_2$, if there exists a simulation relation $\Scal$ of $\sys_1$ by $\sys_2$ satisfying $\pi_{\Xcal_1}(\Scal) = \Vcal_1^{\star}$ and $\pi_{\Xcal_2}(\Scal)\subset\Vcal_2^{\star}$. A simulation relation with this property will be referred to as a full simulation relation.
\end{definition}

The following lemma gives an important property of the notion of simulation, which enables its use as a means for comparing the behavior of systems $\sys_i$. The result of this lemma will be exploited later.
\begin{lemma}\label{lem_simulationpreorder}
Simulation $\preccurlyeq$ is a preorder. Namely, it is
\begin{enumerate}
	\item\label{lem_simulationpreorder_transitivity} reflexive, i.e., $\sys_1\preccurlyeq\sys_1$ for any $\sys_1$;
	\item transitive, i.e., for any systems $\sys_i$, $i\in\{1,2,3\}$ satisfying $\sys_1\preccurlyeq\sys_2$ and $\sys_2\preccurlyeq\sys_3$, it holds that $\sys_1\preccurlyeq\sys_3$.
\end{enumerate}
\end{lemma}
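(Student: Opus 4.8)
The plan is to prove the two properties separately, and in each case I would exhibit an explicit full simulation relation and verify the algebraic conditions of Lemma~\ref{lem_simulationrel} rather than reasoning directly with trajectories.

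For reflexivity I would take the diagonal $\Scal = \{(x,x) \mid x\in\Vcal_1^\star\} \subset \Xcal_1\times\Xcal_1$, which is a linear subspace with $\pi_{\Xcal_1}(\Scal) = \Vcal_1^\star$ in both arguments. Checking Lemma~\ref{lem_simulationrel} is immediate: given $(x_1,x_1)\in\Scal$ and any $d_1$ with $A_1x_1 + G_1d_1\in\Vcal_1^\star$, the choice $d_2 = d_1$ yields $\big(A_1x_1 + G_1d_1, A_1x_1 + G_1d_1\big)\in\Scal$, and the output condition $C_1x_1 = C_1x_1$ holds trivially. Since both projections equal $\Vcal_1^\star$, the relation is full, which gives $\sys_1\preccurlyeq\sys_1$.

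For transitivity, given full simulation relations $\Scal_{12}$ of $\sys_1$ by $\sys_2$ and $\Scal_{23}$ of $\sys_2$ by $\sys_3$, I would form their relational composition
\[
\Scal_{13} = \{\,(x_1,x_3) \mid \exists\, x_2 \text{ s.t.\ } (x_1,x_2)\in\Scal_{12},\ (x_2,x_3)\in\Scal_{23}\,\},
\]
which is again a linear subspace of $\Xcal_1\times\Xcal_3$ satisfying $\pi_{\Xcal_i}(\Scal_{13})\subset\Vcal_i^\star$ for $i\in\{1,3\}$. To verify Lemma~\ref{lem_simulationrel}, take $(x_1,x_3)\in\Scal_{13}$ with witness $x_2$. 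The output condition follows by chaining $C_1x_1 = C_2x_2 = C_3x_3$. For the state condition, since $\Scal_{12}$ is a simulation relation, any $d_1$ with $A_1x_1 + G_1d_1\in\Vcal_1^\star$ produces a $d_2$ with $A_2x_2 + G_2d_2\in\Vcal_2^\star$ and $\big(A_1x_1 + G_1d_1, A_2x_2 + G_2d_2\big)\in\Scal_{12}$; feeding this intermediate point into $\Scal_{23}$ then produces a $d_3$ with $A_3x_3 + G_3d_3\in\Vcal_3^\star$ and $\big(A_2x_2 + G_2d_2, A_3x_3 + G_3d_3\big)\in\Scal_{23}$, so that $\big(A_1x_1 + G_1d_1, A_3x_3 + G_3d_3\big)\in\Scal_{13}$ as required.

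The step I expect to be the crux is establishing fullness of $\Scal_{13}$, namely $\pi_{\Xcal_1}(\Scal_{13}) = \Vcal_1^\star$. The inclusion $\subset$ is clear from the projection property above; for $\supset$, given $x_1\in\Vcal_1^\star = \pi_{\Xcal_1}(\Scal_{12})$ (fullness of $\Scal_{12}$), I pick $x_2$ with $(x_1,x_2)\in\Scal_{12}$, so that $x_2\in\pi_{\Xcal_2}(\Scal_{12})\subset\Vcal_2^\star$. Here the fullness of $\Scal_{23}$ is essential: it gives $\pi_{\Xcal_2}(\Scal_{23}) = \Vcal_2^\star$, so the very same $x_2$ admits an $x_3$ with $(x_2,x_3)\in\Scal_{23}$, whence $(x_1,x_3)\in\Scal_{13}$ and $x_1\in\pi_{\Xcal_1}(\Scal_{13})$. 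Without the matching of the intermediate projection at $\Vcal_2^\star$ this chaining could break down, which is precisely why fullness is built into Definition~\ref{def_simulation}.
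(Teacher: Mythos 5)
Your proposal is correct and follows the same route as the paper: the diagonal relation $\{(x_1,x_1)\mid x_1\in\Vcal_1^{\star}\}$ for reflexivity and the relational composition $\Scal_{13}$ for transitivity, with the verification carried out via the algebraic conditions of Lemma~\ref{lem_simulationrel}. You merely spell out the details (in particular the fullness of $\Scal_{13}$, which indeed hinges on $\pi_{\Xcal_2}(\Scal_{23})=\Vcal_2^{\star}$) that the paper leaves as ``easily verified.''
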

\begin{proof}
Reflexivity of the simulation operation follows by choosing $\Scal = \{(x_1,x_1)\:|\: x_1\in\Vcal^{\star}_1\}$, which is easily verified to be a full simulation relation of $\sys_1$ by itself. To prove transitivity, let $\Scal_{12}$ and $\Scal_{23}$ be (full) simulation relations of $\sys_1$ by $\sys_2$ and $\sys_2$ by $\sys_3$, respectively. Following~\cite{kerber_2010}, define
\begin{align}
\Scal_{13} = \big\{ (x_1,x_3) \:\big|\: &\exists x_2\in\Xcal_2 \text{ such that } \nonumber\\
&(x_1,x_2)\in\Scal_{12}, (x_2,x_3)\in\Scal_{23} \big\}.
\end{align}
Then, it can be checked that $\Scal_{13}$ defines a full simulation relation of $\sys_1$ by $\sys_3$.
\end{proof}

\section{System composition}\label{sec_composition}
In this section, we consider the interconnection of systems through their external variables. Given two systems $\sys_i$, $i\in\{1,2\}$, of the form (\ref{eqn_sysi}), their composition $\sys_1\comp\sys_2$ is defined as the system resulting from setting
\begin{align}
w_1 = w_2.
\label{eqn_systemcomposition_variablesharing}
\end{align}
Thus, system composition is regarded as \emph{variable sharing}; a perspective that is advocated in, e.g., \cite{willems_2007b}.

Following (\ref{eqn_systemcomposition_variablesharing}), a realization of $\sys_1\comp\sys_2$ is given as
\begin{align}
\sys_1\comp\sys_2:\left\{\begin{array}{rl}
\dot{x}^{\comp} &= A^{\comp}x^{\comp} + G^{\comp}d^{\comp}, \\
w^{\comp} &= C^{\comp}x, \\
0 &= H^{\comp}x,
\end{array}\right.\label{eqn_comp_sys}
\end{align}
with state $x^{\comp} = (x^{\comp}_1,x^{\comp}_2)\in\Xcal_1\times\Xcal_2$, external variable $w^{\comp}\in\Wcal$, and driving variable $d^{\comp} = (d^{\comp}_1,d^{\comp}_2)\in\Dcal_1\times\Dcal_2$. The linear maps in (\ref{eqn_comp_sys}) are given by
\begin{alignat}{2}
A^{\comp} &= \left[\begin{array}{cc} A_1 & 0 \\ 0 & A_2 \end{array}\right], &
G^{\comp} &= \left[\begin{array}{cc} G_1 & 0 \\ 0 & G_2 \end{array}\right], \\
C^{\comp} &= \tfrac{1}{2}\!\left[\begin{array}{cc} C_1 & C_2 \end{array}\right], \quad &
H^{\comp} &= \left[\begin{array}{cc} H_1 & 0 \\ 0 & H_2 \\ C_1 & -C_2 \end{array}\right].
\label{eqn_comp_matrices}
\end{alignat}
Note that the final constraint imposed by $H^{\comp}$ restricts the external behavior of (\ref{eqn_comp_sys}) to the external behavior that is common to $\sys_1$ and $\sys_2$, which agrees with (\ref{eqn_systemcomposition_variablesharing}). Due to this additional constraint, the consistent subspace $\Vcal^{\comp,\star}$ of $\sys_1\comp\sys_2$ can not easily be expressed in terms of those of $\sys_1$ and $\sys_2$. Instead, recall that $\Vcal^{\comp,\star}$ is the largest subspace $\Vcal^{\comp}$ satisfying
\begin{align}
A^{\comp}\Vcal^{\comp} \subset \Vcal^{\comp} + \image G^{\comp}, \qquad \Vcal^{\comp}\subset \kernel H^{\comp}.
\label{eqn_comp_consistentsubspace}
\end{align}

The next result states that the composed system $\sys_1\comp\sys_2$ is simulated by both $\sys_1$ and $\sys_2$. In fact, it is the \emph{largest} (with respect to simulation) system with this property.
\begin{theorem}\label{thm_comp}
Consider systems $\sys_i$, $i\in\{1,2\}$, of the form~(\ref{eqn_sysi}) and let their composition $\sys_1\comp\sys_2$ be defined as in~(\ref{eqn_comp_sys})--(\ref{eqn_comp_matrices}). Then, the following two statements hold:
\begin{enumerate}
	\item\label{thm_comp_item1} For $i\in\{1,2\}$, $\sys_1\comp\sys_2$ is simulated by $\sys_i$, i.e.,
	\begin{align}
	\sys_1\comp\sys_2\preccurlyeq \sys_i.
	\label{eqn_thm_comp_upperbound}
	\end{align}
	\item\label{thm_comp_item2} Let $\sys$ be a system of the form (\ref{eqn_sysi}) that is simulated by both $\sys_1$ and $\sys_2$. Then, it is also simulated by $\sys_1\comp\sys_2$. Stated differently, the following holds:
	\begin{align}
	\sys \preccurlyeq \sys_i,\; i\in\{1,2\} \quad\implies\quad \sys \preccurlyeq \sys_1\comp\sys_2.
	\label{eqn_thm_comp_infimum}
	\end{align}
	\vskip1mm%
\end{enumerate}
\end{theorem}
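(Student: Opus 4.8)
The plan is to exhibit explicit full simulation relations for each of the two statements and to verify them through the algebraic characterization of Lemma~\ref{lem_simulationrel}, thereby reducing everything to manipulations with consistent subspaces. Throughout I will repeatedly invoke maximality: any subspace that is contained in $\kernel H$ and satisfies the left inclusion in~(\ref{eqn_consistentsubspace}) for a given system must lie inside that system's consistent subspace.

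For statement~\ref{thm_comp_item1} I treat $i=1$, the case $i=2$ being symmetric, and propose the projection relation
\[
\Scal_{\comp,1} = \big\{ \big((x_1,x_2),x_1\big) \:\big|\: (x_1,x_2)\in\Vcal^{\comp,\star} \big\}.
\]
I would first check $\pi_{\Xcal_1}(\Scal_{\comp,1})\subset\Vcal_1^{\star}$ by showing $\pi_{\Xcal_1}(\Vcal^{\comp,\star})$ satisfies both conditions in~(\ref{eqn_consistentsubspace}) for $\sys_1$: the inclusion in $\kernel H_1$ comes from the top block of $H^{\comp}$, and the invariance $A_1\pi_{\Xcal_1}(\Vcal^{\comp,\star})\subset\pi_{\Xcal_1}(\Vcal^{\comp,\star})+\image G_1$ follows by projecting~(\ref{eqn_comp_consistentsubspace}) onto the first factor. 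Output equality is immediate since the block-row $[\,C_1\ {-}C_2\,]$ of $H^{\comp}$ forces $C_1x_1=C_2x_2$ on $\Vcal^{\comp,\star}$, so $C^{\comp}(x_1,x_2)=\tfrac12(C_1x_1+C_2x_2)=C_1x_1$. For state matching, given any $d^{\comp}=(d_1,d_2)$ keeping $A^{\comp}(x_1,x_2)+G^{\comp}d^{\comp}$ in $\Vcal^{\comp,\star}$, I simply reuse $d_1$ as the driving value for $\sys_1$; the successor $A_1x_1+G_1d_1$ is then the first coordinate of a point of $\Vcal^{\comp,\star}$, hence lies in $\Vcal_1^{\star}$ and pairs correctly in $\Scal_{\comp,1}$. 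Since $\pi_{\Xcal_1\times\Xcal_2}(\Scal_{\comp,1})=\Vcal^{\comp,\star}$ by construction, this is a full simulation relation.

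For statement~\ref{thm_comp_item2}, let $\Scal_1\subset\Xcal\times\Xcal_1$ and $\Scal_2\subset\Xcal\times\Xcal_2$ be full simulation relations witnessing $\sys\preccurlyeq\sys_1$ and $\sys\preccurlyeq\sys_2$, where $\sys$ has matrices $(A,G,C,H)$ and consistent subspace $\Vcal^{\star}$, and form the fibre product over the common state of $\sys$,
\[
\Scal_{\comp} = \big\{ \big(x,(x_1,x_2)\big) \:\big|\: (x,x_1)\in\Scal_1,\ (x,x_2)\in\Scal_2 \big\}.
\]
Fullness on the $\sys$-side, $\pi_{\Xcal}(\Scal_{\comp})=\Vcal^{\star}$, holds because both $\Scal_1$ and $\Scal_2$ already project onto $\Vcal^{\star}$, so each $x\in\Vcal^{\star}$ admits matching $x_1$ and $x_2$. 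Output equality transfers directly: $(x,x_1)\in\Scal_1$ and $(x,x_2)\in\Scal_2$ give $Cx=C_1x_1$ and $Cx=C_2x_2$, whence $C_1x_1=C_2x_2=Cx=C^{\comp}(x_1,x_2)$.

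The crux, and the step I expect to be hardest, is to prove $\pi_{\Xcal_1\times\Xcal_2}(\Scal_{\comp})\subset\Vcal^{\comp,\star}$; here the coupling constraint $C_1x_1=C_2x_2$ in $H^{\comp}$ prevents $\Vcal^{\comp,\star}$ from being the product $\Vcal_1^{\star}\times\Vcal_2^{\star}$, so one cannot argue coordinatewise. Instead I would show that $\Ucal:=\pi_{\Xcal_1\times\Xcal_2}(\Scal_{\comp})$ itself satisfies~(\ref{eqn_comp_consistentsubspace}) and then invoke maximality. Containment $\Ucal\subset\kernel H^{\comp}$ follows from the two component constraints together with the output equality just established. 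For the invariance, given $(x_1,x_2)\in\Ucal$ with witness $x\in\Vcal^{\star}$, I pick $d$ with $x':=Ax+Gd\in\Vcal^{\star}$ and feed the transition $x\to x'$ through the state-matching property of $\Scal_1$ and of $\Scal_2$ separately, obtaining driving values $d_1,d_2$ and successors $x_1'=A_1x_1+G_1d_1$, $x_2'=A_2x_2+G_2d_2$ with $(x',x_1')\in\Scal_1$ and $(x',x_2')\in\Scal_2$. The common witness $x'$ places $(x_1',x_2')\in\Ucal$, and $A^{\comp}(x_1,x_2)=(x_1',x_2')-G^{\comp}(d_1,d_2)\in\Ucal+\image G^{\comp}$. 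With $\Ucal\subset\Vcal^{\comp,\star}$ secured, this very construction also verifies the state-matching condition of Lemma~\ref{lem_simulationrel} for $\Scal_{\comp}$, so $\Scal_{\comp}$ is a full simulation relation and $\sys\preccurlyeq\sys_1\comp\sys_2$.
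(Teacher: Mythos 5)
Your proposal is correct and follows essentially the same route as the paper's proof in Appendix~A: the same projection relation built on $\Vcal^{\comp,\star}$ for statement~\ref{thm_comp_item1} (reusing $d_1$ as the driving value), and the same fibre product of $\Scal_1$ and $\Scal_2$ for statement~\ref{thm_comp_item2}, with the key step in both cases being the maximality argument that the relevant projection satisfies the defining inclusions~(\ref{eqn_consistentsubspace}) or~(\ref{eqn_comp_consistentsubspace}) and hence lies in the consistent subspace. No gaps.
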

\begin{proof}
The proof can be found in Appendix~A.
\end{proof}

Theorem~\ref{thm_comp} thus formalizes the intuition that the interconnection of systems through variable sharing in (\ref{eqn_systemcomposition_variablesharing}) can only restrict the behavior of systems (the first statement of the theorem). Another important consequence of Theorem~\ref{thm_comp} is that the property of simulation is preserved under system composition, as stated next.
\begin{theorem}\label{thm_compsim}
Let $\sys_i$ and $\sys_i'$, $i\in\{1,2\}$, be systems of the form (\ref{eqn_sysi}) such that $\sys_1\preccurlyeq\sys_1'$ and  $\sys_2\preccurlyeq\sys_2'$. Then,
\begin{align}
\sys_1\comp\sys_2 \preccurlyeq \sys_1'\comp\sys_2'.
\label{eqn_thm_compsim_simulationcomp}
\end{align}
\vskip1mm%
\end{theorem}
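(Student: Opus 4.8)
The plan is to prove the statement entirely at the level of the simulation preorder, without constructing any simulation relation explicitly. The key observation is that Theorem~\ref{thm_comp} characterizes $\sys_1\comp\sys_2$ as the \emph{greatest lower bound} of $\sys_1$ and $\sys_2$ with respect to $\preccurlyeq$: statement~\ref{thm_comp_item1} says the composition is a common lower bound, and statement~\ref{thm_comp_item2} says every other common lower bound is dominated by it. Monotonicity of composition should then follow abstractly from this universal property together with transitivity of $\preccurlyeq$ established in Lemma~\ref{lem_simulationpreorder}.

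Concretely, I would first invoke statement~\ref{thm_comp_item2} of Theorem~\ref{thm_comp}, applied with the system $\sys=\sys_1\comp\sys_2$ (which is itself of the form~(\ref{eqn_sysi}) by its realization~(\ref{eqn_comp_sys})) and the pair $\sys_1'$, $\sys_2'$. This reduces the desired conclusion $\sys_1\comp\sys_2\preccurlyeq\sys_1'\comp\sys_2'$ to the two simpler claims
\begin{align}
\sys_1\comp\sys_2 \preccurlyeq \sys_1', \qquad \sys_1\comp\sys_2 \preccurlyeq \sys_2'. \nonumber
\end{align}
It therefore suffices to establish each of these two simulations separately.

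To establish the first claim I would chain two simulations: statement~\ref{thm_comp_item1} of Theorem~\ref{thm_comp} gives $\sys_1\comp\sys_2\preccurlyeq\sys_1$, while the hypothesis gives $\sys_1\preccurlyeq\sys_1'$; transitivity of $\preccurlyeq$ (Lemma~\ref{lem_simulationpreorder}) then yields $\sys_1\comp\sys_2\preccurlyeq\sys_1'$. The second claim follows by the identical argument, using $\sys_1\comp\sys_2\preccurlyeq\sys_2$ together with $\sys_2\preccurlyeq\sys_2'$. Combining the two claims through statement~\ref{thm_comp_item2} of Theorem~\ref{thm_comp} closes the argument.

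I do not expect any genuine obstacle here. All of the analytic content, namely the explicit construction of simulation relations and the verification of the geometric conditions of Lemma~\ref{lem_simulationrel}, has already been absorbed into Theorem~\ref{thm_comp} and Lemma~\ref{lem_simulationpreorder}. The only point requiring minor care is to apply statement~\ref{thm_comp_item2} with the correct instantiation, taking the ``small'' system to be the composition $\sys_1\comp\sys_2$ itself rather than one of its factors.
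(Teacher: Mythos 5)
Your argument is correct and is essentially identical to the paper's own proof: both chain $\sys_1\comp\sys_2\preccurlyeq\sys_i\preccurlyeq\sys_i'$ via statement~\ref{thm_comp_item1} of Theorem~\ref{thm_comp}, the hypothesis, and transitivity from Lemma~\ref{lem_simulationpreorder}, and then close with statement~\ref{thm_comp_item2} applied to $\sys=\sys_1\comp\sys_2$. Nothing is missing.
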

\begin{proof}
From the first statement in Theorem~\ref{thm_comp} we obtain
\begin{align}
\sys_1\comp\sys_2 \preccurlyeq \sys_i \preccurlyeq \sys_i',
\end{align}
for $i\in\{1,2\}$ and where the final simulation relation follows from the assumption $\sys_i\preccurlyeq\sys_i'$. Then, transitivity of the simulation relation (see Lemma~\ref{lem_simulationpreorder}) gives $\sys_1\comp\sys_2\preccurlyeq\sys_i'$, after which the application of the second statement of Theorem~\ref{thm_comp} leads to~(\ref{eqn_thm_compsim_simulationcomp}).
\end{proof}

\section{Contracts as specifications}\label{sec_contracts}
Consider the linear dynamical system
\begin{align}
\sys:\left\{\begin{array}{rl}
\dot{x} &= Ax + Gd, \\ w &= Cx, \\ 0 &= Hx,
\end{array}\right.
\label{eqn_sys}
\end{align}
as in (\ref{eqn_sysi}), but with indices removed for ease of presentation. The system (\ref{eqn_sys}) can interact with its environment through the external variable~$w$. To make this explicit, an environment $\env$ is defined to be a system of the same form, i.e.,
\begin{align}
\env:\left\{\begin{array}{rl}
\dot{x}^e &= A^ex^e + G^ed^e, \\ w^e &= C^ex^e, \\ 0 &= H^ex^e,
\end{array}\right.
\label{eqn_sysenv}
\end{align}
with $x^e\in\Xcal^e$ and driving variable $d^e\in\Dcal^e$. Finally, its external variables $w^e$ take values in the same space $\Wcal$ as the system (\ref{eqn_sys}), i.e., $w^e\in\Wcal$. Consequently, the interconnection of $\sys$ and $\env$ can be considered by setting $w = w^e$, leading to the system $\env\comp\sys$ with external variables $w\in\Wcal$.

We are interested in guaranteeing properties of $\sys$ when interconnected with relevant environments $\env$. To make this explicit, two systems will be introduced. First, define \emph{assumptions} $\ass$ as the system
\begin{align}
\ass:\left\{\begin{array}{rl}
\dot{x}^a &= A^ax^a + G^ad^a, \\ w^a &= C^ax^a, \\ 0 &= H^ax^a,
\end{array}\right.
\label{eqn_sysass}
\end{align}
with $x^a\in\Xcal^a$, $d^a\in\Dcal^a$, and external variables $w^a\in\Wcal$. Next, \emph{guarantees} $\gar$ are defined as
\begin{align}
\gar:\left\{\begin{array}{rl}
\dot{x}^g &= A^gx^g + G^gd^g, \\ w^g &= C^gx^g, \\ 0 &= H^gx^g,
\end{array}\right.
\label{eqn_sysgar}
\end{align}
where $x^g\in\Xcal^g$, $d^g\in\Dcal^g$, and $w^g\in\Wcal$. Note that both the assumptions~(\ref{eqn_sysass}) and guarantees~(\ref{eqn_sysgar}) are systems of the same form as $\sys$ in (\ref{eqn_sys}) and that they share the same space of external variables. Finally, $\Xcal^a$, $\Dcal^a$, $\Xcal^g$, and $\Dcal^g$ above are all finite-dimensional vector spaces.

The introduction of $\ass$ and $\gar$ allows for defining contracts.
\begin{definition}\label{def_contract}
A contract $\con$ is a pair of systems $(\ass,\gar)$ as in (\ref{eqn_sysass}) and (\ref{eqn_sysgar}).
\end{definition}

The relevance of contracts is given by their use as formal specifications for systems $\sys$ as in (\ref{eqn_sys}). This is made explicit using the following definition.
\begin{definition}\label{def_contractimplementation}
An environment $\env$ as in (\ref{eqn_sysenv}) is said to be compatible with the contract $\con = (\ass,\gar)$ if it is simulated by the assumptions $\ass$, i.e., $\env\preccurlyeq\ass$. A system $\sys$ as in (\ref{eqn_sys}) is said to be an implementation of the contract $\con = (\ass,\gar)$ if the composition of $\sys$ with any compatible environment is simulated by the guarantees $\gar$, i.e.,
\begin{align}
\env\comp\sys \preccurlyeq \gar
\end{align}
for all $\env\preccurlyeq\ass$.
\end{definition}

A contract $\con$ thus gives a formal specification for the external behavior of a system $\sys$ through two aspects. First, it specifies (by the assumptions $\ass$) the class of environments in which the system is supposed to operate. Second, it characterizes the required behavior of $\sys$ through the guarantees $\gar$, which the system needs to satisfy \emph{for any} compatible environment.
\begin{remark}
Whereas the concept of contracts was originally proposed in the scope of software engineering in \cite{meyer_1992}, Definition~\ref{def_contract} is inspired by assume/guarantee contracts in formal methods developed in \cite{benveniste_2008}, see also the recent book \cite{benveniste_2018} for a detailed discussion. We note that these works consider models of computation that are inherently discrete in nature, such that the theory developed in \cite{benveniste_2008,benveniste_2018} is not applicable to continuous dynamical systems as in (\ref{eqn_sys}).
\end{remark}

Whereas Definition~\ref{def_contractimplementation} defines contract implementation using a class of environments, the verification of contract implementation can be performed on the basis of the contract $\con = (\ass,\gar)$ directly, i.e., without explicitly constructing all compatible environments. This is stated next.
\begin{lemma}\label{lem_contractimplementation}
Consider a system $\sys$ as in (\ref{eqn_sys}) and a contract $\con = (\ass,\gar)$. Then, $\sys$ is an implementation of the contract if and only if
\begin{align}
\ass\comp\sys \preccurlyeq \gar.
\label{eqn_lem_contractimplementation}
\end{align}
\vskip1mm%
\end{lemma}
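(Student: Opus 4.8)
The plan is to prove the biconditional in Lemma~\ref{lem_contractimplementation} by establishing the two implications separately, using the composition results from Section~\ref{sec_composition} as the main machinery. The key observation is that the assumptions $\ass$ are themselves a system of the form~(\ref{eqn_sysi}), and crucially $\ass$ is compatible with its own contract since $\ass\preccurlyeq\ass$ by reflexivity (Lemma~\ref{lem_simulationpreorder}, item~\ref{lem_simulationpreorder_transitivity}). This makes $\ass$ the ``canonical'' or ``worst-case'' environment against which implementation can be checked.

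For the forward direction (``only if''), I would assume $\sys$ is an implementation, meaning $\env\comp\sys\preccurlyeq\gar$ for every environment $\env$ with $\env\preccurlyeq\ass$. Since $\ass\preccurlyeq\ass$, the system $\ass$ qualifies as a compatible environment, and substituting $\env=\ass$ directly yields $\ass\comp\sys\preccurlyeq\gar$, which is exactly~(\ref{eqn_lem_contractimplementation}). This direction is essentially immediate.

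The reverse direction (``if'') is where the real work lies. I would assume $\ass\comp\sys\preccurlyeq\gar$ and take an arbitrary compatible environment $\env$, so $\env\preccurlyeq\ass$. The goal is to show $\env\comp\sys\preccurlyeq\gar$. The natural strategy is to monotonicity of composition: since $\env\preccurlyeq\ass$ and trivially $\sys\preccurlyeq\sys$ (reflexivity), Theorem~\ref{thm_compsim} gives $\env\comp\sys\preccurlyeq\ass\comp\sys$. Chaining this with the hypothesis $\ass\comp\sys\preccurlyeq\gar$ and invoking transitivity of $\preccurlyeq$ (Lemma~\ref{lem_simulationpreorder}) yields $\env\comp\sys\preccurlyeq\gar$, as required. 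Since $\env$ was an arbitrary compatible environment, this establishes that $\sys$ implements the contract.

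I expect the main subtlety to be confirming that Theorem~\ref{thm_compsim} applies cleanly here, in particular checking that composition is commutative enough (or that the roles of the two factors are interchangeable) so that $\env\comp\sys$ and $\ass\comp\sys$ are the right objects to compare, given that the theorem is stated for $\sys_1\comp\sys_2\preccurlyeq\sys_1'\comp\sys_2'$ with $\sys_1\preccurlyeq\sys_1'$ and $\sys_2\preccurlyeq\sys_2'$. The symmetry of the composition construction in~(\ref{eqn_comp_matrices}) should make this routine, so the entire proof reduces to a short combination of reflexivity, transitivity, and the monotonicity result already in hand.
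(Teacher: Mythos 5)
Your proof is correct and follows essentially the same route as the paper: the ``only if'' direction uses that $\ass$ is itself a compatible environment (by reflexivity), and the ``if'' direction combines the composition results of Section~\ref{sec_composition} (the paper cites Theorem~\ref{thm_comp} directly, of which your invocation of Theorem~\ref{thm_compsim} is the packaged consequence) with transitivity. Your worry about commutativity is unfounded, since Theorem~\ref{thm_compsim} applies verbatim with $\sys_1=\env$, $\sys_1'=\ass$, and $\sys_2=\sys_2'=\sys$.
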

\begin{proof}
This is a direct result of Theorem~\ref{thm_comp} and the fact that $\env = \ass$ is a compatible environment.
\end{proof}

An important consequence of Lemma~\ref{lem_contractimplementation} is that it allows for efficiently verifying whether a system $\sys$ implements a given contract $\con$. To do so, the following theorem is instrumental.
\begin{theorem}\label{thm_contractimplementation}
Consider a system $\sys$ as in (\ref{eqn_sys}) and a contract $\con = (\ass,\gar)$. Let the consistent subspaces of $\ass\comp\sys$ and $\gar$ be denoted as $\Vcal^{\comp,\star}$ and $\Vcal^{g,\star}$, respectively. Then, a linear subspace $\Scal\subset\Xcal\times\Xcal^a\times\Xcal^g$ satisfies $\pi_{\Xcal^a\times\Xcal}(\Scal)\subset\Vcal^{\comp,\star}$ and $\pi_{\Xcal^g}(\Scal)\subset\Vcal^{g,\star}$ and is a simulation relation of $\ass\comp\sys$ by $\gar$ if and only if
\begin{align}
\left[\begin{array}{cc} A^{\comp} & 0 \\ 0 & A^g \end{array}\right]\Scal &\subset \Scal 
+ \image\left[\begin{array}{cc} G^{\comp} & 0 \\ 0 & G^g \end{array}\right], \label{eqn_thm_contractimplementation_state}\\
\left[\begin{array}{c} \image G^{\comp}\cap\Vcal^{\comp,\star} \\ 0 \end{array}\right] &\subset \Scal 
+ \left[\begin{array}{c} 0 \\ \image G^g\cap\Vcal^{g,\star} \end{array}\right], \label{eqn_thm_contractimplementation_input}\\
\Scal &\subset \kernel\left[\begin{array}{ccc} H^{\comp} & 0 \\ 0 & H^g \\ C^{\comp} & -C^g \end{array}\right],
\label{eqn_thm_contractimplementation_output}
\end{align}
where the linear maps $A^{\comp}$, $G^{\comp}$, $C^{\comp}$, $H^{\comp}$ form a realization of $\ass\comp\sys$. The system $\sys$ implements the contract $\con$ if and only if there exists a linear subspace $\Scal$ satisfying the above and, in addition, $\pi_{\Xcal^a\times\Xcal}(\Scal) = \Vcal^{\comp,\star}$.
\end{theorem}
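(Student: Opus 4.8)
The plan is to obtain the final statement as a short assembly of three ingredients: Lemma~\ref{lem_contractimplementation}, which reduces contract implementation to the single simulation $\ass\comp\sys\preccurlyeq\gar$; Definition~\ref{def_simulation}, which rephrases this simulation as the existence of a \emph{full} simulation relation $\Scal$ of $\ass\comp\sys$ by $\gar$, i.e.\ one with $\pi_{\Xcal^a\times\Xcal}(\Scal)=\Vcal^{\comp,\star}$ and $\pi_{\Xcal^g}(\Scal)\subset\Vcal^{g,\star}$; and the first (\emph{if and only if}) part of the present theorem, which translates the property that $\Scal$ is a simulation relation of $\ass\comp\sys$ by $\gar$ into the algebraic conditions (\ref{eqn_thm_contractimplementation_state})--(\ref{eqn_thm_contractimplementation_output}). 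The real work therefore lies in proving that first part; once it is available, the final statement follows by merely appending the equality $\pi_{\Xcal^a\times\Xcal}(\Scal)=\Vcal^{\comp,\star}$ demanded by fullness.

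To prove the algebraic characterization I would start from Lemma~\ref{lem_simulationrel} applied to $\ass\comp\sys$ (with realization $A^{\comp},G^{\comp},C^{\comp},H^{\comp}$) and $\gar$, replacing Definition~\ref{def_simulationrel} by two pointwise conditions: (i) for every $(x^{\comp},x^g)\in\Scal$ and every $d^{\comp}$ with $A^{\comp}x^{\comp}+G^{\comp}d^{\comp}\in\Vcal^{\comp,\star}$ there is $d^g$ with $A^gx^g+G^gd^g\in\Vcal^{g,\star}$ and the successor pair again in $\Scal$; and (ii) $C^{\comp}x^{\comp}=C^gx^g$. Condition (ii) is exactly the third block row of the kernel condition (\ref{eqn_thm_contractimplementation_output}); the first two rows, $H^{\comp}x^{\comp}=0$ and $H^gx^g=0$, are automatic from the standing projection hypotheses $\pi_{\Xcal^a\times\Xcal}(\Scal)\subset\Vcal^{\comp,\star}\subset\kernel H^{\comp}$ and $\pi_{\Xcal^g}(\Scal)\subset\Vcal^{g,\star}\subset\kernel H^g$, using (\ref{eqn_comp_consistentsubspace}). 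Hence (ii) is equivalent to (\ref{eqn_thm_contractimplementation_output}).

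The crux is to show that the quantified state condition (i) is equivalent to the pair (\ref{eqn_thm_contractimplementation_state}) and (\ref{eqn_thm_contractimplementation_input}). The key observation is that, for fixed $(x^{\comp},x^g)\in\Scal$, the admissible successors $\{A^{\comp}x^{\comp}+G^{\comp}d^{\comp}\}\cap\Vcal^{\comp,\star}$ form a coset of the subspace $\image G^{\comp}\cap\Vcal^{\comp,\star}$, and this coset is nonempty because $x^{\comp}\in\Vcal^{\comp,\star}$ together with (\ref{eqn_comp_consistentsubspace}) gives $A^{\comp}x^{\comp}\in\Vcal^{\comp,\star}+\image G^{\comp}$. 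I would therefore decompose (i) into a \emph{base transition} and a \emph{perturbation-matching} part. Condition (\ref{eqn_thm_contractimplementation_state}) says precisely that there exist $d^{\comp},d^g$ with $(A^{\comp}x^{\comp}+G^{\comp}d^{\comp},A^gx^g+G^gd^g)\in\Scal$; by the projection hypotheses both successors then lie automatically in the respective consistent subspaces, yielding one matched base pair $(v_0,v_0^g)$. Condition (\ref{eqn_thm_contractimplementation_input}) says that every $u\in\image G^{\comp}\cap\Vcal^{\comp,\star}$ can be matched by some $s^g\in\image G^g\cap\Vcal^{g,\star}$ with $(u,s^g)\in\Scal$. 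Writing an arbitrary admissible successor as $v_0+u$ and using that $\Scal$ is a linear subspace to add the base pair to the matched perturbation gives $(v_0+u,\,v_0^g+s^g)\in\Scal$ with $v_0^g+s^g\in(A^gx^g+\image G^g)\cap\Vcal^{g,\star}$, which reproduces (i). For the converse I would, given (i): pick any admissible $d^{\comp}$ and read off (\ref{eqn_thm_contractimplementation_state}) from the matched successor; and match two admissible successors differing by a prescribed $u\in\image G^{\comp}\cap\Vcal^{\comp,\star}$, subtracting the two resulting elements of $\Scal$ to produce $(u,s^g)\in\Scal$ with $s^g\in\image G^g\cap\Vcal^{g,\star}$, which is (\ref{eqn_thm_contractimplementation_input}).

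I expect the main obstacle to be exactly this coset and perturbation bookkeeping in the state condition: one must verify that the \emph{intersected} direction spaces $\image G^{\comp}\cap\Vcal^{\comp,\star}$ and $\image G^g\cap\Vcal^{g,\star}$, rather than the full images $\image G^{\comp}$ and $\image G^g$, are the correct objects, so that (\ref{eqn_thm_contractimplementation_state}) and (\ref{eqn_thm_contractimplementation_input}) jointly encode the ``$\forall d^{\comp}\,\exists d^g$'' quantifier \emph{together with} the requirement that both successors remain consistent. With the characterization in hand, the final statement is immediate: by Lemma~\ref{lem_contractimplementation}, $\sys$ implements $\con$ iff $\ass\comp\sys\preccurlyeq\gar$; by Definition~\ref{def_simulation} this holds iff a simulation relation $\Scal$ of $\ass\comp\sys$ by $\gar$ exists with the fullness property $\pi_{\Xcal^a\times\Xcal}(\Scal)=\Vcal^{\comp,\star}$; and by the characterization above such an $\Scal$ exists iff some $\Scal$ satisfies (\ref{eqn_thm_contractimplementation_state})--(\ref{eqn_thm_contractimplementation_output}) and, in addition, $\pi_{\Xcal^a\times\Xcal}(\Scal)=\Vcal^{\comp,\star}$.
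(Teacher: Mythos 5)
Your overall route is the same as the paper's: reduce the final claim to Lemma~\ref{lem_contractimplementation} and Definition~\ref{def_simulation}, and prove the biconditional by translating the two conditions of Lemma~\ref{lem_simulationrel} (applied to $\ass\comp\sys$ and $\gar$) into (\ref{eqn_thm_contractimplementation_state})--(\ref{eqn_thm_contractimplementation_output}). Your base-transition-plus-perturbation decomposition of the quantified successor condition is exactly the paper's argument, which picks one matched pair $(s^{\comp},s^g)$ from (\ref{eqn_thm_contractimplementation_state}), observes that $G^{\comp}(d^{\comp}-v^{\comp}) = (A^{\comp}x^{\comp}+G^{\comp}d^{\comp})-s^{\comp}$ lies in $\image G^{\comp}\cap\Vcal^{\comp,\star}$, and absorbs it via (\ref{eqn_thm_contractimplementation_input}) and linearity of $\Scal$; your ``only if'' steps (evaluate at a point to get (\ref{eqn_thm_contractimplementation_state}), subtract two matched successors to get (\ref{eqn_thm_contractimplementation_input})) also mirror the paper's, which works at $(x^{\comp},x^g)=0$.

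There is, however, one genuine gap in the ``if'' direction. The left-hand side of the biconditional is the conjunction of the projection inclusions $\pi_{\Xcal^a\times\Xcal}(\Scal)\subset\Vcal^{\comp,\star}$ and $\pi_{\Xcal^g}(\Scal)\subset\Vcal^{g,\star}$ \emph{and} the simulation-relation property; the inclusions are therefore part of what must be derived from (\ref{eqn_thm_contractimplementation_state})--(\ref{eqn_thm_contractimplementation_output}), not standing hypotheses available on both sides. Your plan uses them twice without proof in that direction: to dispose of the first two block rows of (\ref{eqn_thm_contractimplementation_output}), and to conclude that the base successors produced by (\ref{eqn_thm_contractimplementation_state}) ``lie automatically in the respective consistent subspaces.'' As written this makes the ``if'' direction incomplete (and the successor argument circular). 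The repair is short and is precisely the first step of the paper's proof: (\ref{eqn_thm_contractimplementation_state}) and (\ref{eqn_thm_contractimplementation_output}) imply $A^{\comp}\pi_{\Xcal^a\times\Xcal}(\Scal)\subset\pi_{\Xcal^a\times\Xcal}(\Scal)+\image G^{\comp}$ and $\pi_{\Xcal^a\times\Xcal}(\Scal)\subset\kernel H^{\comp}$, so maximality of the consistent subspace in (\ref{eqn_comp_consistentsubspace}) gives $\pi_{\Xcal^a\times\Xcal}(\Scal)\subset\Vcal^{\comp,\star}$, and analogously $\pi_{\Xcal^g}(\Scal)\subset\Vcal^{g,\star}$. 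With that inserted first, the rest of your plan goes through.
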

\begin{proof}
The proof is given in Appendix~B.
\end{proof}
\begin{remark}\label{rem_geometriccontrol}
Theorem~\ref{thm_contractimplementation} enables the efficient algorithmic verification of contract implementation through the use of tools from geometric control theory, e.g., \cite{book_wonham_1979,book_trentelman_2001}. Namely, the so-called \emph{invariant subspace algorithm} can compute the largest (in the sense of subspace inclusion) subspace $\Scal$ that satisfies (\ref{eqn_thm_contractimplementation_state}) and (\ref{eqn_thm_contractimplementation_output}). For this subspace $\Scal^{\star}$, the condition (\ref{eqn_thm_contractimplementation_input}) as well as $\pi_{\Xcal\times\Xcal^a}(\Scal) = \Vcal^{\comp,\star}$ then need to be verified, which can again be done algorithmically. For an example of the use of the invariant subspace algorithm for system (bi)simulation (albeit for a different class of systems than the one studied in this paper), see \cite{vanderschaft_2004}.
\end{remark}
\begin{remark}\label{rem_contractequivalence}
From condition (\ref{eqn_lem_contractimplementation}) in Lemma~\ref{lem_contractimplementation}	and Theorem~\ref{thm_comp}, it can be concluded that if $\sys$ is an implementation of $\con = (\ass,\gar)$, it is also an implementation of $(\ass,\ass\comp\gar)$ (and vice versa). There is thus no restriction in replacing $\gar$ by $\gar' = \ass\comp\gar$.
\end{remark}

A distinguishing feature of using contracts as specifications is that contracts themselves can be compared through a notion of \emph{refinement} (see \cite{bauer_2012} for a similar definition in a more abstract setting).
\begin{definition}\label{def_contractrefinement}
A contract $\con' = (\ass',\gar')$ is said to refine a contract $\con = (\ass,\gar)$, denoted as $\con'\preccurlyeq\con$, if the following two conditions hold:
\begin{align}
\ass \preccurlyeq \ass', \qquad \ass\comp\gar' \preccurlyeq \gar.
\label{eqn_def_contractrefinement}
\end{align}
\vskip1mm%
\end{definition}

The above definition allows one to reason about tightening or relaxing specifications, where we note that this involves two aspects. Namely, a contract $\con'$ refines $\con$ if it simultaneously \emph{enlarges} the class of environments and asks for \emph{tighter} guarantees. This observation is made explicit as follows.
\begin{theorem}\label{thm_contractrefinement}
Let $\con' = (\ass',\gar')$ and $\con = (\ass,\gar)$ be contracts such that $\con'\preccurlyeq\con$. Then, the following holds:
\begin{enumerate}
	\item If $\env$ is a compatible environment for $\con$, then it is also a compatible environment for $\con'$.
	\item If $\sys$ is an implementation of $\con'$, then it is also an implementation of $\con$.
\end{enumerate}
\end{theorem}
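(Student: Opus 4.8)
The plan is to reduce both claims to purely order-theoretic manipulations within the preorder $\preccurlyeq$, invoking only the definitions of refinement, compatibility and implementation together with the composition results already established. The tools I would draw on are the transitivity and reflexivity of $\preccurlyeq$ (Lemma~\ref{lem_simulationpreorder}), the monotonicity of composition (Theorem~\ref{thm_compsim}), the upper-bound and greatest-lower-bound properties of composition (the two statements of Theorem~\ref{thm_comp}), and the trajectory-free characterization of implementation (Lemma~\ref{lem_contractimplementation}). No trajectory-level reasoning should be needed, since all of that machinery has already been packaged into these results.

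For the first statement I would argue directly by chaining simulations. Compatibility of $\env$ with $\con = (\ass,\gar)$ means $\env \preccurlyeq \ass$ by Definition~\ref{def_contractimplementation}, while $\con'\preccurlyeq\con$ supplies $\ass \preccurlyeq \ass'$ by Definition~\ref{def_contractrefinement}. Transitivity then yields $\env \preccurlyeq \ass'$, which is exactly compatibility of $\env$ with $\con'$. This step is essentially immediate and needs no further ingredients.

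For the second statement I would first translate both the hypothesis and the goal into single simulations via Lemma~\ref{lem_contractimplementation}: that $\sys$ implements $\con'$ is equivalent to $\ass' \comp \sys \preccurlyeq \gar'$, and the goal that $\sys$ implements $\con$ is equivalent to $\ass \comp \sys \preccurlyeq \gar$. Starting from $\ass \preccurlyeq \ass'$ (refinement) and $\sys \preccurlyeq \sys$ (reflexivity), Theorem~\ref{thm_compsim} gives $\ass \comp \sys \preccurlyeq \ass' \comp \sys$, and transitivity with the hypothesis produces $\ass \comp \sys \preccurlyeq \gar'$. Separately, the upper-bound property (the first statement of Theorem~\ref{thm_comp}) gives $\ass \comp \sys \preccurlyeq \ass$. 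With $\ass \comp \sys$ now simulated by both $\gar'$ and $\ass$, the greatest-lower-bound property (the second statement of Theorem~\ref{thm_comp}) yields $\ass \comp \sys \preccurlyeq \ass \comp \gar'$. Finally, the second refinement condition $\ass \comp \gar' \preccurlyeq \gar$ together with one more application of transitivity gives $\ass \comp \sys \preccurlyeq \gar$, which by Lemma~\ref{lem_contractimplementation} means $\sys$ implements $\con$.

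I expect the only genuinely non-mechanical step to be matching the guarantee side of the refinement. Because Definition~\ref{def_contractrefinement} phrases the guarantee condition in terms of the composed system $\ass \comp \gar'$ rather than $\gar'$ alone, it is not enough to know $\ass \comp \sys \preccurlyeq \gar'$; one must upgrade this to $\ass \comp \sys \preccurlyeq \ass \comp \gar'$. The device that makes this possible is the observation that $\ass \comp \sys$ is automatically simulated by $\ass$, which lets the greatest-lower-bound property of Theorem~\ref{thm_comp} do the work. Recognizing that this auxiliary inequality is available, and that it is precisely what bridges $\gar'$ to $\ass \comp \gar'$, is the crux of the argument; everything else is routine chaining in the preorder.
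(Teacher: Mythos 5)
Your proposal is correct and follows essentially the same route as the paper's own proof: transitivity for statement 1, and for statement 2 the chain $\ass\comp\sys \preccurlyeq \ass'\comp\sys \preccurlyeq \gar'$ combined with $\ass\comp\sys\preccurlyeq\ass$ and the greatest-lower-bound property of composition to obtain $\ass\comp\sys\preccurlyeq\ass\comp\gar'\preccurlyeq\gar$. The only cosmetic difference is that you invoke Theorem~\ref{thm_compsim} (with reflexivity) where the paper cites Theorem~\ref{thm_comp} directly for the monotonicity step.
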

\begin{proof}
Statement 1 is a direct result of the condition $\ass\preccurlyeq\ass'$ in (\ref{eqn_def_contractrefinement}), as transitivity of the simulation relation (see Theorem~\ref{lem_simulationpreorder}) gives that $\env\preccurlyeq\ass$ implies $\env\preccurlyeq\ass'$ (recall Definition~\ref{def_contractimplementation} on compatibility of an environment).

To prove statement 2, Let $\sys$ be an implementation of $\con'$. Now, consider
\begin{align}
\ass\comp\sys \preccurlyeq \ass'\comp\sys \preccurlyeq \gar'.
\end{align}
Here, the first simulation relation is the result of Theorem~\ref{thm_comp} and $\ass\preccurlyeq\ass'$, whereas the second follows from the assumption that $\sys$ implements $\con$, i.e., $\ass'\comp\sys\preccurlyeq\gar'$ by Lemma~\ref{lem_contractimplementation}. Using the definition of system composition and Theorem~\ref{thm_comp}, we also have $\ass\comp\sys\preccurlyeq\ass$, after which the same theorem gives $\ass\comp\sys \preccurlyeq \ass\comp\gar'$. The result then follows from (\ref{eqn_def_contractrefinement}) and transitivity of the simulation relation.
\end{proof}
\begin{remark}
When regarded as a means for characterizing control specifications, contracts in Definition~\ref{def_contract} provide an alternative to specifications expressed as dissipativity or set-invariance properties, see \cite{willems_1972,book_vanderschaft_2017,blanchini_1999}. Contracts have the unique feature that they explicitly characterize the set of environments in which a system $\sys$ is supposed to operate. In addition, the expression of contracts as (a pair of) dynamical systems and the use of the notion of simulation for comparing system behavior allows for defining specifications in which dynamic behavior can explicitly be taken into account, potentially allowing for expressing richer specifications than in dissipativity or set-invariance theory.
\end{remark}

\begin{figure}
\begin{center}
	\vskip1.5mm%
	\begin{tikzpicture}[
	vehicle/.style={rectangle, align=center, inner sep=0mm},
	signal/.style={color=black, line width=0.7pt, >=latex},
	gap/.style={semithick, >=stealth},
	sys/.style={rectangle, draw=black, line width=0.5pt, fill=kthblue!25,
		inner sep=0pt, minimum width=18mm, minimum height=0.9cm,
		text height=1.8ex,text depth=.25ex},			
	]
	\node (v1) at ( 0mm,0mm) [vehicle] {\includegraphics[width=25mm]{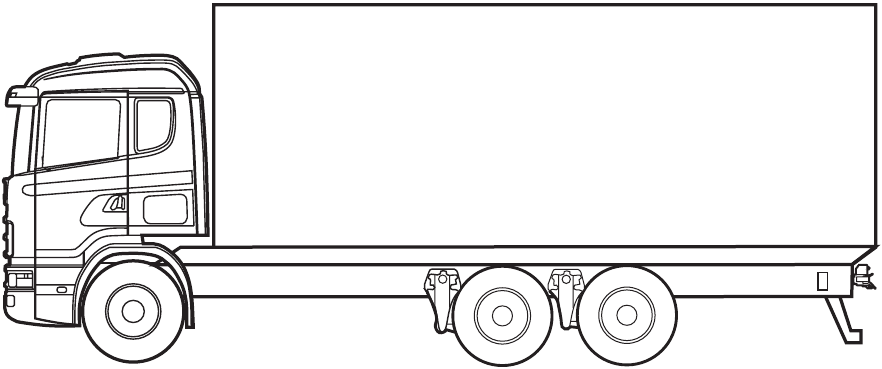}};	     
	\node (v2) at (45mm,0mm) [vehicle] {\includegraphics[width=25mm]{scania_small.pdf}};
	\draw[->,signal] ([yshift=1.5mm]v1.south west) -- node[pos=1,anchor=south,yshift=1mm]{$s_1$, $v_1$} +(-6mm,0mm);
	\draw[->,signal] ([yshift=1.5mm]v2.south west) -- node[pos=1,anchor=south,yshift=1mm]{$s_2$, $v_2$} +(-6mm,0mm);
	\end{tikzpicture}
	\vskip-2mm%
	\caption{A vehicle following system.}
	\label{fig_example_platoon}
\end{center}
\end{figure}

\section{Illustrative example}\label{sec_example}
To illustrate the assume/guarantee reasoning enabled by contracts, consider the vehicle following system in Figure~\ref{fig_example_platoon}. We aim to show that a given controller for the follower vehicle (with index $2$) guarantees that the inter-vehicle distance to its predecessor satisfies a certain spacing policy, regardless of the behavior of the predecessor.

For the vehicle following system, the position $s_i$ and velocity $v_i$, $i\in\{1,2\}$, of both vehicles are regarded as the external variables, such that
\begin{align}
w^{\T} = \left[\begin{array}{cccc} s_1 & v_1 & s_2 & v_2 \end{array}\right],
\label{eqn_example_externalvariables}
\end{align}
and $\Wcal = \R^4$.
We assume that the exact dynamics of the first vehicle is unknown, but that the second vehicle satisfies
\begin{align}
\dot{s}_2 = v_2, \quad \dot{v}_2 = u_2,
\label{eqn_example_vehicledynamics}
\end{align}
and implements the controller
\begin{align}
u_2 = h^{-1}(v_1 - v_2) + kh^{-1}(s_1 - s_2 - hv_2),
\label{eqn_example_controller}
\end{align}
for some $h,k>0$. Now, the objective is to show that the controller (\ref{eqn_example_controller}) guarantees tracking of the so-called constant headway spacing policy (see, e.g., \cite{ioannou_1993}), i.e., that
\begin{align}
s_2(t) - s_1(t) = hv_2(t)
\label{eqn_example_spacingpolicy}
\end{align}
holds for all $t\in\R_+$. To guarantee this property using a contract $\con = (\ass,\gar)$, the requirement (\ref{eqn_example_spacingpolicy}) is captured by choosing the guarantees $\gar$ as in (\ref{eqn_sysgar}) with state $x^g = w^g$ according to (\ref{eqn_example_externalvariables}) and such that
\begin{align}
A^g = I, \quad\! G^g = I, \quad\! C^g = I, \quad\! H^g = \left[\begin{array}{cccc} -1 & 0 & 1 & h \end{array}\right].
\label{eqn_example_guarantees}
\end{align}
Note that the guarantees $\gar$ merely constrain the external variables $w^g$ through the constraint $H^g$ (corresponding to~(\ref{eqn_example_spacingpolicy})) and that the choice for $A^g$, $G^g$, and $C^g$ does not impose further restrictions. As a result, it is easy to show that the consistent subspace of $\gar$ is given as $\Vcal^{g,\star} = \ker H^g$.

Next, even though the exact dynamics of the first vehicle is unknown, it is safe to assume that its position $s_1$ and velocity $v_1$ satisfy the kinematic relation $\dot{s}_1 = v_1$. This can be expressed by choosing the assumptions $\ass$ in (\ref{eqn_sysass}) as
\begin{align}
A^a = \left[\begin{array}{cccc} 0 & 1 & 0 & 0 \\ 0 & 0 & 0 & 0 \\ 0 & 0 & 0 & 0 \\ 0 & 0 & 0 & 0 \end{array}\right]\!, \quad\!
G^a = \left[\begin{array}{ccc} 0 & 0 & 0 \\ 1 & 0 & 0 \\ 0 & 1 & 0 \\ 0 & 0 & 1 \end{array}\right]\!, \quad\!
C^a = I, \label{eqn_example_assumptions}
\end{align}
and with $H^a = 0$, corresponding to the state $x^a = w^a$. We stress that the form (\ref{eqn_example_assumptions}) characterizes the kinematic relation of the first vehicle, but does not constrain the behavior of the second vehicle (in terms of $s_2$ and $v_2$).

Given the contract $\con = (\ass,\gar)$ specified by (\ref{eqn_example_guarantees}) and (\ref{eqn_example_assumptions}), it remains to be shown that the vehicle (\ref{eqn_example_vehicledynamics}) with controller (\ref{eqn_example_controller}) satisfies the contract. To this end, the closed-loop system $\sys$ as in (\ref{eqn_sys}) (with state $x = w$) is represented as
\begin{align}
\!A = \left[\!\begin{array}{cccc} 0 & 0 & 0 & 0 \\ 0 & 0 & 0 & 0 \\ 0 & 0 & 0 & 1 \\ kh^{-1} & h^{-1} & -kh^{-1} & -k-h^{-1} \end{array}\!\right]\!, \,\,
G = \left[\begin{array}{cc} 1 & 0 \\ 0 & 1 \\ 0 & 0 \\ 0 & 0 \end{array}\right]\!,
\label{eqn_example_system}
\end{align}
and with $C = I$, $H = 0$. Similar to before, we stress that the system (\ref{eqn_example_system}) does not pose any restrictions on the behavior of the first vehicle, but only captures the dynamics (\ref{eqn_example_vehicledynamics})--(\ref{eqn_example_controller}).

In order to use Theorem~\ref{thm_contractimplementation} to verify contract implementation, the invariant subspace algorithm (see Remark~\ref{rem_geometriccontrol}) is used to compute the largest linear subspace $\Scal$ satisfying (\ref{eqn_thm_contractimplementation_state}) and (\ref{eqn_thm_contractimplementation_output}). This yields
\begin{align}
\Scal = \big\{ (x^a,x,x^g) \:\big|\: x^a = x = x^g,\, x^g \in \Vcal^{g,\star} \big\}.
\label{eqn_example_simulationrelation}
\end{align}
Then, after noting that the consistent subspace of the composition $\ass\comp\gar$ is given by $\Vcal^{\comp,\star} = \{ (x^a,x) \:|\: x^a = x \}$, it can be verified that $\Scal$ in (\ref{eqn_example_simulationrelation}) also satisfies (\ref{eqn_thm_contractimplementation_input}). Hence, $\Scal$ is a simulation relation of $\ass\comp\sys$ by $\gar$.

We note that, by definition of contract implementation and simulation (see Definition~\ref{def_simulation}), the simulation relation $\Scal$ needs to be full in order to guarantee contract implementation, see Theorem~\ref{thm_contractimplementation}. However, this is not the case as
\begin{align}
\pi_{\Xcal^a\times\Xcal}(\Scal) = \big\{ (x^a,x) \:\big|\: x^a = x,\, x\in\Vcal^{g,\star} \big\}
\subsetneq \Vcal^{\comp,\star}.
\end{align}
This limitation stems from the fact that satisfaction of the spacing policy (\ref{eqn_example_spacingpolicy}) is not guaranteed for initial conditions $x_0$ of $\sys$ that do not satisfy the constraint $H^gx_0 = 0$. Nonetheless, for initial conditions $x_0\in\ker H^g = \Vcal^{g,\star}$, the existence of the simulation relation $\Scal$ guarantees that the controlled vehicle (\ref{eqn_example_vehicledynamics})--(\ref{eqn_example_controller}) achieves tracking of the spacing policy (\ref{eqn_example_spacingpolicy}) (captured in $\gar$) \emph{for any} preceding vehicle that satisfies the kinematic relation (captured in $\ass$).

\begin{figure}
\begin{center}
	\vskip1.5mm%
	\begin{tikzpicture}
	\begin{axis}[
	height=27mm, 
	width=62mm,
	at={(0,0)}, scale only axis,
	xlabel={$t$ [s]}, xlabel style={yshift=1mm},
	ylabel={$v_1,v_2$ [m/s], $e$ [m]}, ylabel style={yshift=-1mm},
	xmin=0, xmax=15, ymin=-0.3, ymax=3.3,
	tick label style={font=\scriptsize},
	every axis/.append style={font=\footnotesize},
	legend pos=south east,
	]
	\addplot[gray,line width=0.75pt] table[x index=0,y index=1] {tikzdata_example_sim.dat};
	\addlegendentry{$v_1$};
	\addplot[black,line width=0.75pt] table[x index=0,y index=2] {tikzdata_example_sim.dat};
	\addlegendentry{$v_2$};
	\addplot[gray,line width=1pt] table[x index=0,y index=3] {tikzdata_example_sim.dat};
	\addlegendentry{$e$};
	\addplot[gray,line width=0.75pt,dashed] table[x index=0,y index=1] {tikzdata_example_sim2.dat};
	\addplot[black,line width=0.75pt,dashed] table[x index=0,y index=2] {tikzdata_example_sim2.dat};
	\addplot[gray,line width=1pt,dashed] table[x index=0,y index=3] {tikzdata_example_sim2.dat};
	\end{axis}	
	\end{tikzpicture}
	\vskip-3mm%
	\caption{Simulation of the model (\ref{eqn_example_vehicledynamics}), (\ref{eqn_example_controller}), and (\ref{eqn_example_firstvehicle}) for initial conditions $[\,1 \; 2 \; 0 \; 1\,]^{\T}\in\ker H^g$ (solid) and $[\,1 \; 2 \; 0.8 \; 1\,]^{\T}\notin\ker H^g$ (dashed) and parameters $h = 1$, $k = 0.25$, $c = 0.5$. Here, $e = -s_1 + s_2 + hv_1$ and the external disturbance is chosen as $d_1(t) = 1$ for $t\in[0,5)$ and $d_1(t) = 1 + \sin(t - 5)$ for $t\geq5$.}
	\label{fig_example_simulation}
\end{center}
\end{figure}
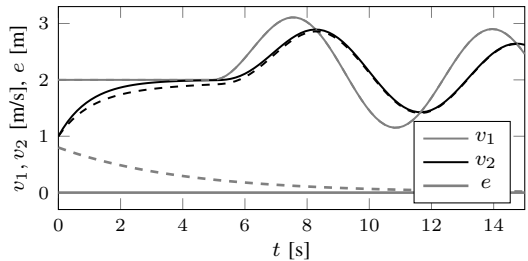

To illustrate this, let the dynamics of the first vehicle be given as
\begin{align}
\dot{s}_1 = v_1, \quad \dot{v}_1 = -cv_1 + d_1,
\label{eqn_example_firstvehicle}
\end{align}
for some constant $c>0$ and external disturbance $d_1$. It can be verified that the dynamics (\ref{eqn_example_firstvehicle}) is simulated by the assumptions $\ass$ in (\ref{eqn_example_assumptions}) (when (\ref{eqn_example_firstvehicle}) is appended with arbitrary dynamics for the second vehicle in a similar way as in (\ref{eqn_example_assumptions})). Hence, the satisfaction of the spacing policy is guaranteed for initial conditions $x_0\in\ker H^g$. This is confirmed by the results in Figure~\ref{fig_example_simulation}, which depicts (in solid lines) a time simulation of the model (\ref{eqn_example_vehicledynamics}), (\ref{eqn_example_controller}), and (\ref{eqn_example_firstvehicle}) for initial conditions $x_0\in\ker H^g$ and confirms that the spacing policy (\ref{eqn_example_spacingpolicy}) is satisfied for all time $t\geq0$ (even when the first vehicle is subject to time-varying disturbances). Finally, we note that it can in addition be shown that the subspace $\ker H^g$ is attractive, even though this is not a requirement in the contract $\con = (\ass,\gar)$. This is illustrated by a simulation in Figure~\ref{fig_example_simulation} as well (in dashed lines).

\section{Conclusions}\label{sec_conclusions}
Assume/guarantee contracts for dynamical systems are introduced in this paper as a means of characterizing control system specifications. For these contracts, a result is given that enables efficient algorithmic verification of contract satisfaction and the notion of refinement is introduced to allow for comparison of contracts.

We regard this work as the first step towards contract theory for continuous-time dynamical control systems. For such theory, tools for system composition are crucial (see \cite{sangiovanni-vincentelli_2012,benveniste_2018}) and future work will focus on this topic. The expression of relevant control specifications in terms of contracts will be a second topic of further research.

\section*{Appendix A. Proof of Theorem~\ref{thm_comp}}
The two statements are proven separately.

1) The result will be proven for the case $i=1$ by showing that the conditions in Lemma~\ref{lem_simulationrel} hold, after which the case $i=2$ follows similarly. Define the linear subspace $\Scal\subset\Xcal_1\times\Xcal_2\times\Xcal_1$ as
\begin{align}
\Scal = \big\{ (x^{\comp}_1,x^{\comp}_2,x_1) \:\big|\: x^{\comp}_1 = x_1,\, (x^{\comp}_1,x^{\comp}_2)\in\Vcal^{\comp,\star} \big\},
\label{eqn_thm_comp_proof_part1_S}
\end{align}
from which it can be observed that $\pi_{\Xcal_1\times\Xcal_2}(\Scal) = \Vcal^{\comp,\star}$. However, for $\Scal$ to be a full simulation relation, it also needs to satisfy $\pi_{\Xcal_1}(\Scal)\subset\Vcal_1^{\star}$ (see Definition~\ref{def_simulation}), where the projection $\pi_{\Xcal_1}$ is understood as
\begin{align}
\pi_{\Xcal_1}(\Scal) = \big\{ x_1 \:\big|\: &\exists(x^{\comp}_1,x^{\comp}_2)\in\Xcal_1\times\Xcal_2 \nonumber\\
&\text{such that } (x^{\comp}_1,x^{\comp}_2,x_1)\in\Scal \big\}.
\label{eqn_thm_comp_proof_part1_Sproj}
\end{align}
To show this, note that the definition of $\Scal$ in (\ref{eqn_thm_comp_proof_part1_S}) implies that (\ref{eqn_thm_comp_proof_part1_Sproj}) can be written as $\pi_{\Xcal_1}(\Scal) = \pi_{\Xcal_1}(\Vcal^{\comp,\star})$. By the defining properties of the consistent subspace (\ref{eqn_comp_consistentsubspace}) and the structure of the linear maps in (\ref{eqn_comp_matrices}), we obtain
\begin{align}
\begin{split}
A_1\pi_{\Xcal_1}(\Vcal^{\comp,\star}) &\subset \pi_{\Xcal_1}(\Vcal^{\comp,\star}) + \image G_1,\\
\pi_{\Xcal_1}(\Vcal^{\comp,\star}) &\subset \kernel H_1,
\label{eqn_thm_comp_proof_part1_step1}
\end{split}
\end{align}
and observe that this equals the defining equation of the consistent subspace of $\sys_1$ (see (\ref{eqn_consistentsubspace})). As a result, $\pi_{\Xcal_1}(\Scal) = \pi_{\Xcal_1}(\Vcal^{\comp,\star})\subset\Vcal_1^{\star}$ and the desired result is obtained.

Next, let $(x^{\comp}_1,x^{\comp}_2,x_1)\in\Scal$ and choose any $d^{\comp} = (d^{\comp}_1,d^{\comp}_2)\in\Dcal_1\times\Dcal_2$ such that $A^{\comp}x^{\comp} + G^{\comp}d^{\comp}\in\Vcal^{\comp,\star}$, where $x^{\comp} = (x^{\comp}_1,x^{\comp}_2)$. Then, by recalling that $x^{\comp}_1 = x_1$ due to the definition of $\Scal$ in (\ref{eqn_thm_comp_proof_part1_S}), it readily follows that
\begin{align}
\left[\begin{array}{c} A_1x^{\comp}_1 + G_1d^{\comp}_1 \\ A_2x^{\comp}_2 + G_2d^{\comp}_2 \\ A_1x_1 + G_1d_1 \end{array}\right]\in\Scal,
\label{eqn_thm_comp_proof_part1_step2}
\end{align}
for the choice $d_1 = d^{\comp}_1$. We stress that this choice also guarantees $A_1x_1 + G_1d_1\in\Vcal_1^{\star}$ as (\ref{eqn_thm_comp_proof_part1_step2}) implies $A_1x_1 + G_1d_1 \in \pi_{\Xcal_1}(\Scal)$ (recall the earlier result $\pi_{\Xcal_1}(\Scal)\subset\Vcal_1^{\star}$).

In addition, as $(x^{\comp}_1,x^{\comp}_2,x_1)\in\Scal$ implies $(x^{\comp}_1,x^{\comp}_2)\in\Vcal^{\comp,\star}$, the final algebraic constraint in (\ref{eqn_comp_sys}) gives $C_1x^{\comp}_1 = C_2x^{\comp}_2$, after which it is observed that
\begin{align}
C^{\comp}x^{\comp} = \tfrac{1}{2}C_1x^{\comp}_1 + \tfrac{1}{2}C_2x^{\comp}_2 = C_1x_1,
\label{eqn_thm_comp_proof_part1_step3}
\end{align}
where it is recalled that $x^{\comp}_1 = x_1$ on $\Scal$. As (\ref{eqn_thm_comp_proof_part1_step2}) and (\ref{eqn_thm_comp_proof_part1_step3}) correspond to the conditions of Lemma~\ref{lem_simulationrel}, the result (\ref{eqn_thm_comp_upperbound}) follows from application of this lemma.

2) In this part of the proof, the system $\sys$ will be presented as in (\ref{eqn_sysi}), but the subscripts will be omitted. Consequently, its consistent subspace is written as $\Vcal^*$. Let $\Scal_i\subset\Xcal\times\Xcal_i$ be full simulation relations of $\sys$ by $\sys_i$, $i\in\{1,2\}$ and define the linear subspace $\Scal\subset\Xcal\times\Xcal_1\times\Xcal_2$ as
\begin{align}
\Scal = \big\{ (x,x^{\comp}_1,x^{\comp}_2) \:\big|\: (x,x^{\comp}_1)\in\Scal_1,\, (x,x^{\comp}_2)\in\Scal_2 \big\}.
\label{eqn_thm_comp_proof_part2_S}
\end{align}
As both $\Scal_1$ and $\Scal_2$ are full simulation relations (i.e., they satisfy $\pi_{\Xcal}(\Scal_i) = \Vcal^{\star}$), we have $\pi_{\Xcal}(\Scal) = \Vcal^{\star}$. The desired condition $\pi_{\Xcal_1\times\Xcal_2}(\Scal)\subset\Vcal^{\comp,\star}$ will be shown later.

Choose any $(x,x^{\comp}_1,x^{\comp}_2)\in\Scal$ and let $d\in\Dcal$ be such that $Ax + Bd\in\Vcal^{\star}$. Now, consider
\begin{align}
\left[\begin{array}{c} Ax + Gd \\ A_1x^{\comp}_1 + G_1d^{\comp}_1 \\ A_2x^{\comp}_2 + G_2d^{\comp}_2 \end{array}\right]
= \left[\begin{array}{c} s \\ s^{\comp}_1 \\ s^{\comp}_2 \end{array}\right].
\label{eqn_thm_comp_proof_part2_step1}
\end{align}
Since $(x,x^{\comp}_1)\in\Scal_1$ by (\ref{eqn_thm_comp_proof_part2_S}) and by recalling that $\Scal_1$ is a simulation relation of $\sys$ by $\sys_1$, there exists $d^{\comp}_1\in\Dcal_1$ such that $(s,s^{\comp}_1)\in\Scal_1$. Similarly, $d^{\comp}_2\in\Dcal_2$ can be chosen to satisfy $(s,s^{\comp}_2)\in\Scal_2$, after which
\begin{align}
(s,s^{\comp}_1,s^{\comp}_2)\in\Scal.
\label{eqn_thm_comp_proof_part2_step2}
\end{align}

Next, $(x,x^{\comp}_i)\in\Scal_i$ for $i\in\{1,2\}$ lead to $Cx = C_1x^{\comp}_1$ and $Cx = C_1x^{\comp}_1$, respectively, such that
\begin{align}
Cx = \tfrac{1}{2}C_1x^{\comp}_1 + \tfrac{1}{2}C_2x^{\comp}_2 = C^{\comp}x^{\comp}
\label{eqn_thm_comp_proof_part2_step3}
\end{align}
for all $(x,x^{\comp}_1,x^{\comp}_2)\in\Scal$ and where $x^{\comp} = (x^{\comp}_1,x^{\comp}_2)$ and $C^{\comp}$ are the state and output map of $\sys_1\comp\sys_2$, respectively. Even though (\ref{eqn_thm_comp_proof_part2_step2}) and (\ref{eqn_thm_comp_proof_part2_step3}) resemble the conditions of Lemma~\ref{lem_simulationrel}, it needs to be shown that $\pi_{\Xcal_1\times\Xcal_2}(\Scal)\subset\Vcal^{\comp,\star}$ before this lemma can be applied.

To this end, observe that (\ref{eqn_thm_comp_proof_part2_step1}) and (\ref{eqn_thm_comp_proof_part2_step2}) imply
\begin{align}
\left[\begin{array}{ccc} A & 0 & 0 \\ 0 & A_1 & 0 \\ 0 & 0 & A_2 \end{array}\right]\Scal \subset \Scal + 
\image \left[\begin{array}{ccc} G & 0 & 0 \\ 0 & G_1 & 0 \\ 0 & 0 & G_2 \end{array}\right],
\end{align}
from which it can be concluded that
\begin{align}
A^{\comp}\pi_{\Xcal_1\times\Xcal_2}(\Scal) \subset \pi_{\Xcal_1\times\Xcal_2}(\Scal) + \image G^{\comp}.
\label{eqn_thm_comp_proof_part2_step4}
\end{align}
Here, recall that $A^{\comp}$ and $G^{\comp}$ represent $\sys_1\comp\sys_2$ as defined in (\ref{eqn_comp_matrices}). Now, take any $(x^{\comp}_1,x^{\comp}_2)\in\pi_{\Xcal_1\times\Xcal_2}(\Scal)$ and let $x\in\Xcal$ be such that $(x,x^{\comp}_1,x^{\comp}_2)\in\Scal$. By the reasoning above (\ref{eqn_thm_comp_proof_part2_step3}), we have $Cx = C_1x^{\comp}_1$ and $Cx = C_2x^{\comp}_2$, leading to $C_1x^{\comp}_1=C_2x^{\comp}_2$. Stated differently, the final algebraic constraint in (\ref{eqn_sysi}) is satisfied. In addition, as $(x,x^{\comp}_i)\in\Scal_i$ and $\pi_{\Xcal_i}(\Scal_i)\subset\Vcal_i^{\star}$ by definition, it holds that $H_ix^{\comp}_i = 0$ for $i\in\{1,2\}$ (see the definition of $\Vcal_i^{\star}$ in (\ref{eqn_consistentsubspace})). Combining the above observations leads to the conclusion
\begin{align}
\pi_{\Xcal_1\times\Xcal_2}(\Scal) \subset \kernel H^{\comp},
\label{eqn_thm_comp_proof_part2_step5}
\end{align}
with $H^{\comp}$ as in (\ref{eqn_comp_matrices}). Now, a comparison of (\ref{eqn_thm_comp_proof_part2_step4}) and (\ref{eqn_thm_comp_proof_part2_step5}) with (\ref{eqn_comp_consistentsubspace}) shows that $\pi_{\Xcal_1\times\Xcal_2}(\Scal)\subset\Vcal^{\comp,\star}$ as $\Vcal^{\comp,\star}$ is the \emph{largest} subspace satisfying (\ref{eqn_comp_consistentsubspace}). Note also that this result implies that the choice of $(d^{\comp}_1,d^{\comp}_2)$ below (\ref{eqn_thm_comp_proof_part2_step1}) is such that $(s^{\comp}_1,s^{\comp}_2)\in\Vcal^{\comp,\star}$.

At this point, all conditions of Lemma~\ref{lem_simulationrel} are satisfied for the full simulation relation $\Scal$ in (\ref{eqn_thm_comp_proof_part2_S}). Consequently, the application of this lemma shows the result (\ref{eqn_thm_comp_infimum}).

\section*{Appendix B. Proof of Theorem~\ref{thm_contractimplementation}}
\textit{If.} If will be shown that (\ref{eqn_thm_contractimplementation_state})--(\ref{eqn_thm_contractimplementation_output}) imply conditions (\ref{eqn_lem_simulationrel_state}) and (\ref{eqn_lem_simulationrel_output}) for $\sys_1 = \ass\comp\sys$ and $\sys_2 = \gar$, after which the result follows from Lemma~\ref{lem_simulationrel}.

First, from (\ref{eqn_thm_contractimplementation_state}) and (\ref{eqn_thm_contractimplementation_output}) as well as the definition of the projection $\pi_{\Xcal^a\times\Xcal}(\Scal)$, it holds that
\begin{align}
A^{\comp}\pi_{\Xcal^a\times\Xcal}(\Scal) &\subset \pi_{\Xcal^a\times\Xcal}(\Scal) + \image G^{\comp}, \\
\pi_{\Xcal^a\times\Xcal}(\Scal) &\subset \kernel H^{\comp}.
\end{align}
By comparing the above to the definition of the consistent subspace in (\ref{eqn_consistentsubspace}) and recalling that $\Vcal^{\comp,\star}$ is the \emph{largest} subspace satisfying (\ref{eqn_consistentsubspace}), it is immediate that $\pi_{\Xcal^a\times\Xcal}(\Scal)\subset\Vcal^{\comp,\star}$. The inclusion $\pi_{\Xcal^g}(\Scal)\subset\Vcal^{g,\star}$ can be shown analogously.

To show that (\ref{eqn_lem_simulationrel_state}) holds, take any $(x^{\comp},x^g)\in\Scal$, with $x^{\comp}\in\Xcal^a\times\Xcal$ the state of $\ass\comp\sys$. Then, by (\ref{eqn_thm_contractimplementation_state}), there exist $(s^{\comp},s^g)\in\Scal$ and $(v^{\comp},v^g)\in\Dcal^{\comp}\times\Dcal^g$ (with $\Dcal^{\comp} = \Dcal^a\times\Dcal$) such that
\begin{align}
\left[\begin{array}{c} A^{\comp}x^{\comp} \\ A^gx^g \end{array}\right] =
\left[\begin{array}{c} s^{\comp} \\ s^g \end{array}\right] - \left[\begin{array}{c} G^{\comp}v^{\comp} \\ G^gv^g \end{array}\right].
\label{eqn_thm_simulationrel_proof_step1}
\end{align}
In addition, take any $d^{\comp}\in\Dcal^{\comp}$ such that $A^{\comp}x^{\comp} + G^{\comp}d^{\comp}\in\Vcal^{\comp,\star}$ and note that
\begin{align}
G^{\comp}(d^{\comp} - v^{\comp}) &= \big(A^{\comp}x^{\comp} + G^{\comp}d^{\comp}\big) - s^{\comp},
\label{eqn_thm_simulationrel_proof_step2}
\end{align}
with $s^{\comp}$ satisfying (\ref{eqn_thm_simulationrel_proof_step1}).
As $A^{\comp}x^{\comp} + G^{\comp}d^{\comp} \in \Vcal^{\comp,\star}$ by choice and $s^{\comp} \in \pi_{\Xcal^a\times\Xcal}(\Scal) \subset\Vcal^{\comp,\star}$ by the earlier result on $\Scal$, it follows that
\begin{align}
G(d^{\comp} - v^{\comp}) \in \image G^{\comp} \cap \Vcal^{\comp,\star}.
\end{align}
Then, application of (\ref{eqn_thm_contractimplementation_input}) guarantees the existence of $(\tilde{s}^{\comp},\tilde{s}^g)\in\Scal$ such that
\begin{align}
\left[\begin{array}{c} G(d^{\comp} - v^{\comp}) \\ 0 \end{array}\right] =
\left[\begin{array}{c} \tilde{s}^{\comp} \\ \tilde{s}^g \end{array}\right] + \left[\begin{array}{c} 0 \\ G^g\tilde{v}^g \end{array}\right],
\label{eqn_thm_simulationrel_proof_step3}
\end{align}
for some $\tilde{v}^g$ satisfying $G^g\tilde{v}^g\in \image G^g \cap \Vcal^{g,\star}$. Now, for $d^g = v^g + \tilde{v}^g$, the result
\begin{align}
\left[\begin{array}{c} A^{\comp}x^{\comp} + G^{\comp}d^{\comp} \\ A^gx^g + G^gd^g \end{array}\right] &= 
\left[\begin{array}{c} A^{\comp}x^{\comp} + G^{\comp}v^{\comp} + G^{\comp}(d^{\comp} - v^{\comp}) \\ A^gx^g + G^gv^g + G^g\tilde{v}^g \end{array}\right] \nonumber\\
&=
\left[\begin{array}{c} s^{\comp} + \tilde{s}^{\comp} \\ s^g + \tilde{s}^g \end{array}\right] \in \Scal
\label{eqn_thm_simulationrel_proof_step4}
\end{align}
is obtained by applying (\ref{eqn_thm_simulationrel_proof_step1}) and (\ref{eqn_thm_simulationrel_proof_step3}). Thus, for any $(x^{\comp},x^g)\in\Scal$ and any $d^{\comp}\in\Dcal^{\comp}$ such that $A^{\comp}x^{\comp} + G^{\comp}d^{\comp}\in\Vcal^{\comp,\star}$, we have constructed a $d^g\in\Dcal^g$ such that (\ref{eqn_thm_simulationrel_proof_step4}) holds. Note, in addition, that $A^gx^g + G^gd^g = s^g + \tilde{s}^g\in\pi_{\Xcal^g}(\Scal)\subset\Vcal^{g,\star}$, implying that condition \ref{lem_simulationrel_item1} in Lemma~\ref{lem_simulationrel} is satisfied.

It remains to be shown that (\ref{eqn_lem_simulationrel_output}) holds, but this is immediate from (\ref{eqn_thm_contractimplementation_output}).

\textit{Only if.} Let $\Scal$ satisfy $\pi_{\Xcal^a\times\Xcal}(\Scal)\subset\Vcal^{\comp,\star}$ and $\pi_{\Xcal^g}(\Scal)\subset\Vcal^{g,\star}$ as well as the conditions of Lemma~\ref{lem_simulationrel}. It is clear that (\ref{eqn_lem_simulationrel_state}) implies (\ref{eqn_thm_contractimplementation_state}). Next, consider (\ref{eqn_lem_simulationrel_state}) for $(x^{\comp},x^g) = 0$ and take any $d^{\comp}$ such that $G^{\comp}d^{\comp}\in\Vcal^{\comp,\star}$. Then, there exists $(s^{\comp},s^g)\in\Scal$ and $d^g\in\Dcal^g$ such that $G^gd^g\in\Vcal^{g,\star}$ and
\begin{align}
\left[\begin{array}{c} G^{\comp}d^{\comp} \\ 0 \end{array}\right] = 
\left[\begin{array}{c} s^{\comp} \\ s^g \end{array}\right] + \left[\begin{array}{c} 0 \\ G^gd^g \end{array}\right],
\end{align}
from which (\ref{eqn_thm_contractimplementation_input}) follows. Finally, (\ref{eqn_thm_contractimplementation_output}) follows from $\pi_{\Xcal^a\times\Xcal}(\Scal)\subset\Vcal^{\comp,\star}$ and $\pi_{\Xcal^g}(\Scal)\subset\Vcal^{g,\star}$ (recall the definition of the consistent subspace in (\ref{eqn_consistentsubspace})) and (\ref{eqn_lem_simulationrel_output}).

\bibliographystyle{plain}
\bibliography{main}

\end{document}